\RequirePackage[2020-10-01]{latexrelease}
\documentclass[MPS,Times2COL]{WileyNJDv5}

\usepackage{algorithm,algpseudocode}
\usepackage{bm}
\usepackage{framed}
\usepackage{comment}
\usepackage{subcaption}

\articletype{ORIGINAL RESEARCH}%

\received{Date Month Year}
\revised{Date Month Year}
\accepted{Date Month Year}
\journal{Journal}
\volume{00}
\copyyear{2026}
\startpage{1}

\begin{document}

\title{Fast Convergence and Robustness for Two-Layered Forgetting Recursive Least Square under Finite Excitation}

\author[1]{Satoshi Tsuruhara}
\author[2]{Kazuhisa Ito}

\authormark{S. Tsuruahra \textsc{et al.}}
\titlemark{Fast Convergence and Robustness for Two-Layered Forgetting Recursive Least Square under Finite Excitation}

\address[1]{\orgdiv{Postdoctoral Fellow, College of Systems Engineering and Science}, \orgname{Shibaura Institute of Technology}, \orgaddress{\state{307 Fukasaku, Minuma, Saitama 3378570}, \country{Japan}}}

\address[2]{\orgdiv{Mechatronics Program, College of Systems Engineering and Science}, \orgname{Shibaura Institute of Technology}, \orgaddress{\state{307 Fukasaku, Minuma, Saitama 3378570}, \country{Japan}}}

\corres{Corresponding author: Satoshi Tsuruhara. \email{nb23110@shibaura-it.ac.jp}}

\presentaddress{307 Fukasaku, Minuma, Saitama 3378570, Japan}


\abstract[Abstract]{Under nonpersistent excitation (non-PE) conditions, conventional methods such as exponential forgetting (EF) or directional forgetting (DF) recursive least squares (RLS) that rely on direct regressor vectors exhibit inherent limitations in terms of stability guarantees for parameter errors, robustness to system changes, and convergence rates. To address these limitations, this study introduces a novel two-layer forgetting RLS (TLF-RLS) identification method based on an augmented regressor matrix constructed using DF, which ensures global exponential stability and enhances robustness under non-PE condition. However, the convergence rate of the parameter is strongly dependent on the forgetting factor because of the introduction of EF in the outer layer, which causes an estimation windup under non-PE condition. To address this issue, a novel reconfiguration-based EF (ReEF) algorithm is proposed, which is achieved through variable- and matrix-based forgetting related to the magnitude of the eigenvalues of the current covariance matrix. Theoretical analysis indicates that TLF-RLS with ReEF algorithm guarantees uniform ultimate boundedness of the condition number under mild assumptions. Consequently, the proposed method resolves the trade-off between fast parameter convergence and robustness in both transient and steady-state responses under changes in system characteristics. Numerical simulations of three aforementioned cases demonstrate the effectiveness of the proposed method.}

\keywords{Adaptive identification, persistent excitation, forgetting factor, condition number}

\jnlcitation{\cname{%
\author{Satoshi T.},
\author{Kazuhisa I.}}.
\ctitle{Fast Convergence and Robustness for Two-Layered Forgetting Recursive Leas Squares under Finite Excitation}
\cvol{2021;00(00):1--18}.}

\maketitle

\renewcommand\thefootnote{}

\renewcommand\thefootnote{\fnsymbol{footnote}}
\setcounter{footnote}{1}

\section{Introduction}\label{sec1}
Adaptive identification has long been considered a key problem, wherein the set of parameter of a given mathematical model structure is estimated online \cite{Ljung1999}.
In particular, ensuring the convergence of true parameter values is an essential objective. For this purpose, a persistent excitation (PE) is necessary and sufficient condition for guaranteeing convergence to true parameter values in normalized gradient (NG) and recursive least squares (RLS) identification methods with exponential forgetting (EF) \cite{tao,landau2011adaptive}. 
However, in practice, it is difficult to satisfy the PE condition. For example, excessive vibration may occur when quasi-white noise such as pseudo-random binary sequences is applied to mechanical systems, leading to system failure. Furthermore, in adaptive control systems, the reference trajectory is typically specified by the designer, and the control input is generated accordingly, making it difficult to ensure the PE condition.
Recently, several adaptive identification and control methods that relax the PE condition while still guaranteeing stability and true value convergence have been proposed \cite{survey_PE1, survey_PE2}. The potential benefits of adaptive control, as exemplified by approaches such as concurrent learning (CL) \cite{CL1,CL2,DCL,DCL2}, composite model reference adaptive control \cite{CL-composite}, and dynamic regressor extension and mixing \cite{DREM}, have been reported. 
In particular, CL improves on the conventional NG algorithm by adding a recorded data term that accumulates important past input/output (I/O) data, enabling the easy convergence of parameters to true values under a finite excitation (FE) condition.
The FE condition is weaker than the PE condition because the former requires excitation only over a certain interval.
However, the CL method has two practical issues, namely, it cannot handle changes in characteristics including parameter jumps and the estimated parameter has a very slow convergence rate.

In this study, we propose an identification method based on effective forgetting factors to address these issues.
Forgetting factor algorithms have been studied to improve the parameter convergence rate in RLS and enhance robustness against changes in system characteristics \cite{FF_survey}. However, it is known that the widely used EF algorithm causes estimation windup under non-PE condition because it cannot guarantee the existence of an upper bound on the covariance matrix. Consequently, EF-based RLS (EF-RLS) \cite{EF-RLS} has easily become unstable \cite{DF}. In contrast, several methods have been proposed to ensure the boundedness of the covariance matrix under non-PE condition. Among these, we focus on the directional forgetting (DF) algorithm proposed by Cao et al. \cite{DF}. Inspired by reference \cite{DF-CL}, we propose a CL method based on DF (DF-CL) for discrete-time systems.
Although the DF-CL method is considered effective in addressing these issues under the FE condition, and its convergence rate is relatively slow because it is based on the NG method. In previous studies, extensions based on finite- and fixed-time frameworks, which guarantee the finite-time convergence of the estimated parameters, have been proposed \cite{FT_CL1,FT_CL2,FT_CL3}. These approaches require only minor modifications to the parameter update law and improve convergence to true parameter values. However, the improvement remains limited because the essential characteristics of NG algorithm is preserved.

To address these problems, this study proposes a novel two-layer forgetting RLS (TLF-RLS) identification method that retains the advantages of the DF-CL and EF-RLS \cite{TLF-RLS}. The proposed method has a two-layer structure, wherein the inner layer generates an augmented regressor matrix based on DF and the outer layer estimates the parameter based on EF. Consequently, convergence to the true parameter values under FE conditions with faster parameter convergence than those of CL-based methods and high robustness against parameter jumps are achieved.
In related work, several RLS-based identification methods have been proposed to guarantee convergence to the true parameter values without requiring the PE condition \cite{RLS_FE1,RLS_FE2,RLS_FE3}. In \cite{RLS_FE1}, the finite-time convergence of the parameter error is achieved by appropriately introducing a regularization term into information matrix. Reference \cite{RLS_FE2} guaranteed the exponential convergence of the parameter error by modifying the regressor vector and parameter update law to ensure the positive definiteness of the information matrix. Reference \cite{RLS_FE3} guaranteed the asymptotic convergence of the parameter error by improving the conventional CL method.
Although these methods are effective, \cite{RLS_FE1} and \cite{RLS_FE3} guarantee only asymptotic convergence and do not include forgetting factors, limiting their applicability to time-invariant systems. In \cite{RLS_FE2}, continuous-time systems are considered and the EF algorithm is applied to regressor updates. Therefore, it is difficult to achieve satisfactory performance unless the covariance matrix is appropriately updated and its thresholds appropriately selected.
The proposed method has a two-layer structure that effectively combines the two forgetting factors, significantly improving the convergence rate of the parameter while maintaining high robustness against changes in system characteristics.
Further, this paper discusses the estimation windup problem \cite{DF} in the TLF-RLS method. Conventionally, the DF algorithm guarantees the boundedness of the covariance or information matrix. Estimation windup can be suppressed because the lower bound of the information matrix can be ensured by both the value of the forgetting factor and the level of signal excitation.
Although the TLF-RLS method utilizes the DF algorithm to satisfy the PE condition and employs the EF algorithm to ensure a fast parameter convergence rate, it cannot necessarily ensure that this lower bound of the information matrix remains sufficiently large. Consequently, the selection of the forgetting factor in TLF-RLS is critical, which makes its optimal design difficult.
Consequently, although TLF-RLS maintains robustness against changes in system characteristics in the steady-state response and achieves an improved convergence rate, it may not sufficiently suppress the estimation windup for such parameter jumps. To address this issue, this paper proposes reconfiguration exponential forgetting (ReEF), inspired by the concept of variable directional forgetting (VDF) \cite{VDF1,VDF2}. ReEF is a modification of the EF algorithm and is used in the outer layer forgetting factor.
The proposed outer layer forgetting factor guarantees the uniform ultimate boundedness of the condition number under mild assumptions.
Therefore, the condition number of the information or covariance matrix is improved, and we can expect to suppress the estimation windup while maintaining the convergence rate of the parameters.
In summary, under non-PE condition, conventional single-layer parameter adaptive identification methods such as RLS based on EF, DF, and VDF, as well as the DF-CL method cannot simultaneously achieve convergence to the true parameter values, fast convergence rates, robustness against changes in system characteristics, and suppression of estimation windup. In contrast, the proposed method can achieve these properties through its two-layer structure using several forgetting factors under FE conditions.

The main contributions of this study are as follows:
1) We propose a novel TLF-RLS identification method that combines DF and EF, and we demonstrate its global exponential stability with respect to parameter estimation errors under FE conditions.
2) We demonstrate the effectiveness of the proposed method through numerical examples, including comparisons with various existing methods.
3) We propose a novel outer layer forgetting factor for TLF-RLS that suppresses the estimation windup by guaranteeing the uniform ultimate boundedness of the condition number of the covariance matrix.
4) We demonstrate through numerical examples with parameter jumps that robustness against changes in system characteristics is improved.
The remainder of this paper is organized as follows: Section 2 presents the preliminaries and the problem formulation. Section 3 describes the three proposed methods and highlights their relationship with conventional methods. Section 4 briefly describes the stability analysis of the proposed method. Section 5 validates the effectiveness of the proposed method through several numerical simulations.
 Finally, Section 6 presents the conclusions and future work.

\textit{Notation}$\ $
Throughout this paper, for an invertible matrix $A$, $A^{-T}\triangleq (A^{-1})^T=(A^T)^{-1}$. $\|\cdot\|$ represents the Euclidean norm for vectors. $\mathbb{R}_{>0}$ represents the positive real space. For a symmetric matrix $A\in\mathbb{R}^{n\times n}$,
$\sigma_{\max}(A),\ \sigma_{\min}(A)$ represent the maximum and minimum eigenvalues of $A$, respectively. Both $\lambda$ and $\mu$ represent the forgetting factors. In addition, the condition number of $A$ is defined as
$\kappa(A)\triangleq\frac{\sigma_{\max}(A)}{\sigma_{\min}(A)}$.

\section{Problem Formulation and Preliminaries}
This section presents the formulation of the adaptive parameter identification with objectives, and the excitation conditions.
\subsection{Problem Formulation}
Consider a linear parametric model given as
\begin{align}\label{ma:LPM}
y(k+1)=\phi^T(k)\theta,
\end{align}
where $y(k)\in\mathbb{R}$, $\phi(k)\in\mathbb{R}^n$, and $\theta\in\mathbb{R}^n$ represent the output, regressor vector, and unknown parameter vector, respectively.
The prediction model can be expressed using the predicted output $\hat{y}(k)\in\mathbb{R}$ as
\begin{align}\label{ma:ELPM}
\hat{y}(k+1)=\phi^T(k)\hat{\theta}(k),
\end{align}
where $\hat{\theta}(k)\in\mathbb{R}^n$ represents the estimate of $\theta$.
Subsequently, the identification error $q(k+1)\in\mathbb{R}$ is defined as 
\begin{align}\label{ma:id_error}
q(k+1)&\triangleq\hat{y}(k+1)-y(k+1)\\
&=\phi^T(k)\hat{\theta}(k)-\phi^T(k)\theta = \phi^T(k)\tilde{\theta}(k),
\end{align}
where $\tilde{\theta}(k)\triangleq \hat{\theta}(k)-\theta\in\mathbb{R}^n$ represents the parameter error.
The objective of adaptive parameter identification proposed in this study is to satisfy the condition
\begin{align}
\lim_{k\to\infty}\|\hat{\theta}(k)-\theta\|=0\quad (\mathrm{exponentially}).
\end{align}
This condition indicates that the parameters converge exponentially to their true values. The second objective is to ensure that the parameter error remains sufficiently small over the entire interval when the true parameter values change, such as in the case of parameter jumps.

\subsection{Preliminaries}
The excitation conditions considered in this study are summarized below. In both the NG and EF-RLS methods, parameter convergence to the true values is guaranteed if and only if the PE condition is satisfied \cite{tao}.
\begin{definition}\label{def:PE}
The bounded matrix signal $\Phi(k)\in\mathbb{R}^{n\times n}$, $n\geq 1$, is PE if there exist $\delta>0$ and $\alpha_{\phi}>0$ such that
\begin{align}
\sum_{k=\sigma}^{\sigma+\delta}\Phi(k)\Phi^T(k)\geq\alpha_{\Phi}I,\ \forall \sigma\geq k_0.
\end{align}
\end{definition}
\begin{definition}\label{def:FE}
The bounded matrix signal $\Phi(k)\in\mathbb{R}^{n\times n}$, $n\geq 1$ is exciting over time sequence set $\{\sigma_0,\sigma_0+1,\ldots,\sigma_0+\delta_\Phi\},\ \delta_\Phi>0,\sigma_0\geq k_0$, if for some $\alpha_\Phi>0$, it holds that
\begin{align}
\sum_{k=\sigma_0}^{\sigma_0+\delta_\Phi}\Phi(k)\Phi^T(k)\geq \alpha_{\Phi}I
\end{align}
This condition is known as FE.
\end{definition}
As mathematical preliminaries, these two definitions of the excitation conditions above for the regressor are introduced based on \cite{tao,Slotine1991-ANC,TLF-RLS}. These excitation conditions are discussed in a matrix form, particularly in the case of TLF-RLS.
For the FE condition to hold, condition \ref{cond:rank} must be satisfied. This condition is formulated in the context of the regressor vector employed in the DF-CL method.
\begin{condition}[rank condition]\label{cond:rank}
The signal matrix comprising the summations of the regressor vector $\phi(k)$ (e.g., augmented regressor matrix), is a full column rank matrix.
\end{condition}

\section{Proposed methods}
We describe the generation of the augmented regressor matrix that accumulates input/output (I/O) data using the DF algorithm as the foundation for all proposed methods. Next, we present the three parameter update laws under the FE condition based on this algorithm and clarify their relationships with the conventional methods.

\subsection{Generation of the augmented regressor matrix}
We generate an augmented regressor matrix using the DF algorithm to achieve the continued satisfaction of the rank condition (See Condition \ref{cond:rank}) and boundedness of the information matrix, and for improved robustness against changes in the system characteristics.
Let the accumulated data of the regressor $\phi(k)$ be denoted by the augmented regressor matrix $\Phi(k)\in\mathbb{R}^{n\times n}$, and let the accumulated output data be denoted by the auxiliary vector $X(k)\in\mathbb{R}^{n}$. In the absence of forgetting, these signals are updated as
\begin{align}\label{ma:acc}
\begin{cases}
\Phi(k+1)=\Phi(k)+\dfrac{\phi(k)\phi^T(k)}{m^2(k)}\\[10pt]
X(k+1)=X(k)+\dfrac{\phi(k)y(k+1)}{m^2(k)}
\end{cases},
\end{align}
where $m(k)\triangleq \sqrt{1+\phi^T(k)\phi(k)}$.
Based on \eqref{ma:acc}, we define the extended identification error as
\begin{align}
\Phi(k)\hat{\theta}(k) - X(k)
= \Phi(k)\hat{\theta}(k) - \Phi(k)\theta
= \Phi(k)\tilde{\theta}(k).
\end{align}
As mentioned earlier, the update of accumulated data without the forgetting factor cannot consider changes in the system characteristics. Therefore, based on the DF algorithm, the update law is modified as 
\begin{align}\label{ma:acc_DF}
\begin{cases}
\displaystyle
\Phi(k+1)=\Phi(k)-\mu\frac{\Phi(k)\phi(k)\phi^T(k)}{\phi^T(k)\Phi(k)\phi(k)}\Phi(k)+\frac{\phi(k)\phi^T(k)}{m^2(k)}\\[10pt]
\displaystyle
X(k+1)=X(k)-\mu\frac{\Phi(k)\phi(k)\phi^T(k)}{\phi^T(k)\Phi(k)\phi(k)}X(k)+\frac{\phi(k)y(k+1)}{m^2(k)}
\end{cases},
\end{align}
where $\mu\in(0,1)$ denotes the forgetting factor of DF.
A value closer to $1$ results in a stronger forgetting, which is the opposite of the conventional forgetting factor. The update law accumulates data while prioritizing current data and retain less weighting on the past data by adding the second term based on DF to \eqref{ma:acc}.
However, in a general DF-RLS, initial matrices are set to be positive definite. In these proposed methods, the initial values of the augmented regressor matrix and auxiliary vector need to be set to zero to avoid bias in the accumulated data. Therefore, only the accumulation step, i.e., \eqref{ma:acc}, is performed until $\Phi(k)$ satisfies Condition \ref{cond:rank}. Once Condition \ref{cond:rank} is satisfied, positive definiteness is preserved, as indicated in Proposition \ref{prop_DF}. This is because the DF algorithm performs forgetting only in the direction of the newly added regressor data.
These algorithms are summarized in Algorithm \ref{alg:gene_DF}.
\begin{algorithm}
\caption{Signal generation by accumulation based on DF}\label{alg:gene_DF}
\begin{algorithmic}
\State \textbf{Step 1}: Set $\Phi(0)=0_{n\times n}$, $X(0)=\bm{0}_{n}$, $\mu\in(0,1)$
\State \textbf{Step 2}:
\If{$\displaystyle \mathrm{rank}(\Phi(k))<\mathrm{rank}\left(\Phi(k)+\frac{\phi(k)\phi^T(k)}{m^2(k)}\right)$}
   \State $\displaystyle \Phi(k+1)=\Phi(k)+\frac{\phi(k)\phi^T(k)}{m^2(k)}$
    \State $\displaystyle X(k+1)=X(k)+\frac{\phi(k)y(k+1)}{m^2(k)}$
\Else
    \State $\displaystyle \Phi(k+1)=\Phi(k)-\mu\frac{\Phi(k)\phi(k)\phi^T(k)}{\phi^T(k)\Phi(k)\phi(k)}\Phi(k)+\frac{\phi(k)\phi^T(k)}{m^2(k)}$
    \State $\displaystyle X(k+1)=X(k)-\mu\frac{\Phi(k)\phi(k)\phi^T(k)}{\phi^T(k)\Phi(k)\phi(k)}X(k)+\frac{\phi(k)y(k+1)}{m^2(k)}$
\EndIf
\State \textbf{Step 3}: Set $k \leftarrow k+1$ and go to \textbf{Step 2}
\end{algorithmic}
\end{algorithm}

\begin{proposition}\label{prop_DF}
Assume that the regressor vector is bounded and satisfies the FE condition. Then, for the augmented regressor matrix $\Phi(k)$ generated by Algorithm \ref{alg:gene_DF}, there exist $\alpha_{\phi}, \beta_{\phi}\in\mathbb{R}_{>0}$ such that for all time steps $k \ge k_e$,
\begin{align}
\alpha_{\phi}I < \Phi(k) < \beta_{\phi}I,
\end{align}
where $k_e$ represents the time step at which $\Phi(k)$ becomes positive definite for the first time.
\end{proposition}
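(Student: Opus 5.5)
The plan is to treat the two regimes of Algorithm \ref{alg:gene_DF} separately and, once $\Phi(k)$ is positive definite, to analyse the DF branch through a symmetric factorisation. For $k<k_e$ only the accumulation step \eqref{ma:acc} is active. The FE assumption together with Condition \ref{cond:rank} guarantees that after finitely many steps the accumulated sum $\sum \phi\phi^T/m^2$ has full rank, which defines $k_e<\infty$. Moreover $\Phi(k_e)\le k_e I$, because each increment $\phi\phi^T/m^2$ has norm smaller than $1$. This gives finite initial bounds $0<\sigma_{\min}(\Phi(k_e))\le\sigma_{\max}(\Phi(k_e))<\infty$.

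For $k\ge k_e$, write $v(k)=\Phi^{1/2}(k)\phi(k)$. The DF recursion then reads
\begin{align*}
\Phi(k+1)=\Phi^{1/2}(k)\Big(I-\mu\frac{v(k)v^T(k)}{v^T(k)v(k)}\Big)\Phi^{1/2}(k)+\frac{\phi(k)\phi^T(k)}{m^2(k)}.
\end{align*}
The middle factor is a projection-type matrix with eigenvalues $1$ (of multiplicity $n-1$) and $1-\mu>0$. Hence $\Phi(k+1)\ge(1-\mu)\Phi(k)>0$ and positive definiteness is preserved by induction. The same identity also shows that forgetting acts only along $\Phi^{1/2}\phi$: directions not excited by the current regressor are left untouched, which is what should allow a uniform lower bound even though excitation is only finite.

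\textbf{Upper bound.} I would follow Cao and Schwartz \cite{DF}. Take a unit vector $x$ and compare $x^T\Phi(k+1)x$ with $x^T\Phi(k)x$. Along the regressor direction one has exactly
\begin{align*}
\phi^T\Phi(k+1)\phi=(1-\mu)\,\phi^T\Phi(k)\phi+\|\phi\|^4/m^2 .
\end{align*}
This is a contraction plus a bounded input, because $\|\phi\|^2/m^2<1$. The goal is an invariant-level statement of the form
\begin{align*}
\sigma_{\max}(\Phi(k+1))\le\max\{\sigma_{\max}(\Phi(k)),\,c\},\qquad c=c(\mu,\bar\phi),
\end{align*}
where $\bar\phi$ bounds $\|\phi(k)\|$. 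One then takes $\beta_\phi$ slightly larger than $\max\{\sigma_{\max}(\Phi(k_e)),c\}$ so that the inequality is strict. The intended argument is that the rank-one addition can only raise eigenvalues along $\phi$, and along $\phi$ the $(1-\mu)$ shrinkage dominates once $\phi^T\Phi\phi$ is large. The delicate part is that $\phi$ need not be an eigenvector of $\Phi$. I would handle this with an interlacing argument applied to the rank-one downdate followed by the rank-one update.

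\textbf{Lower bound.} Here I would pass to the inverse. By the matrix inversion lemma, the forgetting part $\bar\Phi=\Phi-\mu\Phi\phi\phi^T\Phi/(\phi^T\Phi\phi)$ satisfies
\begin{align*}
\bar\Phi^{-1}=\Phi^{-1}+\frac{\mu}{1-\mu}\frac{\phi\phi^T}{\phi^T\Phi\phi},
\end{align*}
and adding $\phi\phi^T/m^2$ only decreases the inverse. Hence $\Phi(k+1)^{-1}$ grows at most along $\phi$. Combining this with the explicit correction term from Sherman--Morrison for the added $\phi\phi^T/m^2$, one obtains a scalar recursion for $\phi^T\Phi^{-1}\phi$ with the same contraction-plus-bounded-input structure as above. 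The upper bound $\beta_\phi$ and the bound $\|\phi\|/m<1$ make its input uniformly bounded, and this should yield $\sigma_{\max}(\Phi^{-1}(k))\le 1/\alpha_\phi$, which is the desired lower bound.

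I expect the main obstacle to be making these directional (scalar) recursions into bounds on the whole spectrum when $\phi$ is not aligned with eigenvectors. This is also where finite excitation is used: after $k_e$, directions that are never re-excited are not forgotten, so their eigenvalues stay frozen at their $k_e$ values. If the direct interlacing route stalls, I would fall back on citing the boundedness theorem for DF in \cite{DF}. Its hypotheses are bounded regressors and a positive definite initial information matrix, which here is $\Phi(k_e)$.
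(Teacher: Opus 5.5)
\textbf{The lower bound is false as stated.} Your fallback (FE makes $k_e$ finite, then invoke the DF boundedness theory of \cite{DF}) is exactly the paper's proof. Your upper-bound plan is also sound. Write $\Phi(k)=\Lambda I-D$ with $\Lambda=\sigma_{\max}(\Phi(k))$, bound the forgetting term from below using $\phi^T\Phi\phi\le\Lambda\|\phi\|^2$, and apply Cauchy--Schwarz to $x^TD\phi$. This shows $x^T\Phi(k+1)x\le\Lambda$ for every unit $x$ once $\Lambda\ge(1+\mu)/\mu$, so your constant $c$ depends only on $\mu$.

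The genuine gap is the lower bound. Your step ``$\beta_\phi$ and $\|\phi\|/m<1$ make the input uniformly bounded'' fails. In the direction $\hat\phi=\phi/\|\phi\|$, the forgetting inflates the inverse by $\frac{\mu}{1-\mu}\bigl(\hat\phi^T\Phi(k)\hat\phi\bigr)^{-1}\ge\frac{\mu}{(1-\mu)\beta_\phi}$, regardless of $\|\phi\|$. The Sherman--Morrison correction only gives $\hat\phi^T\Phi^{-1}(k+1)\hat\phi<m^2(k)/\|\phi(k)\|^2$. What you need is a \emph{lower} bound on $\|\phi(k)\|/m(k)$, and the hypotheses give none.

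Here is a counterexample. Take $n=1$, $\phi(0)=1$, and $\phi(k)=2^{-k}$ for $k\ge1$. The regressor is bounded and FE on $\{0,1\}$, and $k_e=1$ with $\Phi(1)=1/2$. Algorithm~\ref{alg:gene_DF} then gives
\[
\Phi(k+1)=(1-\mu)\Phi(k)+\frac{4^{-k}}{1+4^{-k}}\ \longrightarrow\ 0 .
\]
So no argument, direct or by citation, yields $\alpha_\phi$ from boundedness and FE alone.

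Your remark that this is ``where finite excitation is used'' misplaces the issue. FE only makes $k_e$ finite. After $k_e$, DF removes the fraction $\mu$ of the information along $\Phi\phi$ however small $\phi$ is, and replenishes only $\|\phi\|^2/m^2$.

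\textbf{How to repair it.} The missing ingredient is an extra hypothesis. One option is $\|\phi(k)\|\ge\underline{\phi}>0$ for $k\ge k_e$. Another is a dead zone that skips forgetting when $\|\phi(k)\|$ is small; this also covers $\phi(k)=0$, where the DF step is undefined. The first hypothesis holds in the paper's simulation, because $\sin(0.1k)$ and $\sin(0.1(k-1))$ never vanish together.

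With it, the lower bound follows in one step, without inverses or interlacing. Write $\Phi(k)=\lambda_0 I+E$ with $\lambda_0=\sigma_{\min}(\Phi(k))$ and $E\ge 0$. Apply $\frac{(a+b)^2}{c+d}\le\frac{a^2}{c}+\frac{b^2}{d}$ to the forgetting term. For every unit $x$ this gives
\[
x^T\Phi(k+1)x\ \ge\ \lambda_0+(1-\mu)\,x^TEx+(x^T\phi)^2\Bigl(\frac{1}{m^2}-\frac{\mu\lambda_0}{\|\phi\|^2}\Bigr).
\]
Split on the sign of the bracket and use $(x^T\phi)^2\le\|\phi\|^2$. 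This yields $\sigma_{\min}(\Phi(k+1))\ge\min\{\sigma_{\min}(\Phi(k)),\,\|\phi(k)\|^2/(\mu m^2(k))\}$. The bound is sharp in the scalar case. Any $\alpha_\phi$ below $\min\{\sigma_{\min}(\Phi(k_e)),\,\underline{\phi}^2/(\mu(1+\underline{\phi}^2))\}$ then works.
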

\begin{proof}
The regressor vector is bounded and satisfies the FE condition, and therefore, the normalized regressor vector $\frac{\phi(k)}{m(k)}$ also satisfies the FE condition.
Then, we have
\begin{align}
\exists k_e\in\mathbb{R}_{>0}\ \mathrm{s.t.}\ \sum_{i=0}^{k_e}\frac{\phi(i)\phi^T(i)}{m(i)}>0
\end{align}
Hence, under FE conditions, time step $k_e$ exists.
Therefore, based on the stability analysis of conventional DF-RLS \cite{DF}, the boundedness of the augmented regressor matrix is guaranteed.
\end{proof}
\begin{remark}
In Algorithm \ref{alg:gene_DF}, the augmented regressor matrix $\Phi(k)$ can be generated using EF instead of DF. However, in this case, the boundedness of the augmented regressor matrix $\Phi(k)$ cannot be guaranteed because data are uniformly forgotten under non-PE condition, which leads to a monotonically non-increasing in the eigenvalues of the augmented regressor matrix in directions where no new data are added. Hence, the stability of the identification method for the augmented regressor matrix using EF cannot be ensured under the FE condition.
\end{remark}

\subsection{DF-CL}
Based on the generation of the augmented regressor matrix $\Phi(k)$ described in the previous section, the modified parameter update law for the discrete-time CL method proposed by \cite{DCL,DCL2} is given as 
\begin{align}\label{ma:DF-CL}
\hat{\theta}(k+1)
= \hat{\theta}(k)
- \eta(k)\frac{\phi(k)q(k+1)}{m^2(k)}
- \eta(k)\bigl(\Phi(k)\hat{\theta}(k) - X(k)\bigr),
\end{align}
where $\eta(k)$ represents the learning weight and is selected from the range
\begin{align}
0 < \eta(k) < \frac{2 m^2(k-1)}{2 \lambda_{\max}\!\left(\phi(k)\phi^T(k)\right) + \lambda_{\max}\!\left(\Phi(k)\right) m^2(k-1)}.
\end{align}
Here, the following selection is practically effective.
\begin{align}
\eta(k)=\frac{m^2(k-1)}{2 \lambda_{\max}\!\left(\phi(k)\phi^T(k)\right) + \lambda_{\max}\!\left(\Phi(k)\right) m^2(k-1)}.
\end{align}
The parameter update law \eqref{ma:DF-CL} modifies the conventional NG method by adding a third term based on accumulated data. The existence of an augmented identification error that preserves positive definiteness ensures convergence to the true parameter values under FE conditions.
Furthermore, updating the augmented regressor matrix in Algorithm \ref{alg:gene_DF} not only preserves positive definiteness but also increases the inverse of its condition number. In discrete-time CL \cite{DCL}, maximizing the inverse of the condition number with respect to the accumulated data improves the convergence rate. Therefore, the proposed method is robust against changes in system characteristics and exhibits improved convergence performance.

\subsection{TLF-RLS}
Similar to the DF-CL algorithm, we propose a novel TLF-RLS algorithm, which is RLS-based identification approach that utilizes the augmented regressor matrix $\Phi(k)$.
Figure \ref{fig:concept_TLF} illustrates the concept of TLF-RLS.
In the inner layer, the augmented regressor matrix $\Phi(k)$ is generated and updated using the DF algorithm and regressor vector $\phi(k)$ in Algorithm \ref{alg:gene_DF}. In contrast, in the outer layer, an RLS method for a multiple-input multiple-output (MIMO) system is constructed using the EF algorithm for the augmented regressor matrix. The forgetting algorithm in the inner layer is referred to as the inner forgetting factor $\mu\in(0,1)$, whereas that in the outer layer is referred to as the outer forgetting factor $\lambda\in(0,1)$.

A key feature of this algorithm is formulated as shown below.
\begin{theorem}\label{thm:PE}\cite{TLF-RLS}
Assume that the regressor vector $\phi(k)$ satisfies the FE condition.
Then, for the augmented regressor matrix $\Phi(k)$, $\delta = k_e >0$, $\alpha,\ \beta\in\mathbb{R}_{>0}$ exist such that
\begin{align}\label{ma:PE_Omega}
\alpha_{\Phi} I\leq \sum_{k=\sigma}^{\sigma+\delta}\Phi^2(k)\leq \beta_{\Phi} I,\ \forall \sigma\geq 0.
\end{align}
Therefore, the augmented regressor matrix $\Phi(k)$ satisfies the PE condition.
\end{theorem}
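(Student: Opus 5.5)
The plan is to derive the statement directly from Proposition \ref{prop_DF}, treating the matrix sequence $\Phi(k)$ as the signal in Definition \ref{def:PE}. Since $\Phi(k)$ is symmetric, $\Phi(k)\Phi^T(k)=\Phi^2(k)$. The sum in \eqref{ma:PE_Omega} is therefore exactly the excitation sum of Definition \ref{def:PE} with $\Phi$ as the regressor. The whole argument reduces to turning the pointwise two-sided bound $\alpha_\phi I<\Phi(k)<\beta_\phi I$, valid for $k\ge k_e$, into a windowed bound on $\sum\Phi^2$.

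\textbf{Step 1 (pointwise squares).} For a symmetric matrix, $\alpha_\phi I<\Phi(k)<\beta_\phi I$ means every eigenvalue of $\Phi(k)$ lies in $(\alpha_\phi,\beta_\phi)$. Because $\Phi^2(k)$ has the same eigenvectors with squared eigenvalues, this gives $\alpha_\phi^2 I<\Phi^2(k)<\beta_\phi^2 I$ for all $k\ge k_e$.

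\textbf{Step 2 (window length and summation).} For $k<k_e$ we still have $0\le\Phi^2(k)$, since $\Phi(k)$ is positive semidefinite: Algorithm \ref{alg:gene_DF} only adds rank-one PSD terms before $k_e$. We also have an upper bound $\Phi(k)\le \sum_{i<k_e}\phi(i)\phi^T(i)/m^2(i)\le k_e I$, because $\|\phi/m\|<1$. Hence $\Phi^2(k)\le \bar\beta^2 I$ with $\bar\beta=\max(\beta_\phi,k_e)$ for all $k$.

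Now take $\delta=k_e$. Any window $\{\sigma,\dots,\sigma+\delta\}$ has $\delta+1$ indices, so it contains at least one index $k\ge k_e$; the index $k=\sigma+\delta$ always qualifies, since $\sigma\ge0$. Summing the bounds over the window gives
\begin{align*}
\alpha_\phi^2 I\le \sum_{k=\sigma}^{\sigma+\delta}\Phi^2(k)\le (\delta+1)\bar\beta^2 I .
\end{align*}
This yields the statement with $\alpha_\Phi=\alpha_\phi^2$ and $\beta_\Phi=(k_e+1)\bar\beta^2$, uniformly in $\sigma\ge0$. The matrix $\Phi(k)$ is bounded by Step 2, so Definition \ref{def:PE} applies and $\Phi(k)$ is PE.

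\textbf{Main obstacle.} The delicate point is the transient phase $k<k_e$ and the choice of $\delta=k_e$. In that phase no lower bound holds and $\Phi(k)$ is singular, so the argument depends on every window reaching at least one time step at or after $k_e$. If the paper's convention were that $k_e$ could be large relative to the windows that matter, one would instead state the bound for $\sigma\ge k_e$ with $\delta=0$ and treat early windows separately. I would present the version above and note that the constant $\alpha_\Phi$ comes entirely from Proposition \ref{prop_DF}, which relies on the DF stability analysis \cite{DF}.
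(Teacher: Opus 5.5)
Your proposal follows the paper's proof: you invoke Proposition~\ref{prop_DF} and take $\delta=k_e$, so every window $\{\sigma,\dots,\sigma+k_e\}$ contains an index $\ge k_e$ where the lower bound holds, and the upper bound comes from the same proposition; you also supply the squaring step and the treatment of $k<k_e$, which the paper's two-line proof leaves implicit. One small correction: before $k_e$, Algorithm~\ref{alg:gene_DF} also executes the DF branch whenever $\phi(k)$ lies in the range of $\Phi(k)$, so $\Phi(k)$ is not only a sum of rank-one terms, but your bounds $0\le\Phi(k)\le k_e I$ still hold because for $\mu\in(0,1)$ the DF step $\Phi-\mu\Phi\phi\phi^T\Phi/(\phi^T\Phi\phi)=\Phi^{1/2}(I-\mu vv^T/\|v\|^2)\Phi^{1/2}$, with $v=\Phi^{1/2}\phi$, preserves positive semidefiniteness and only subtracts a positive semidefinite term.
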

\begin{proof}
From Proposition \ref{prop_DF}, after the time step $k_e$, the augmented regressor matrix $\Phi(k)$ is positive definite such that the longest interval for the existence of the lower bound $\alpha$ is positive for $\delta=k_e$.
The upper bound is obtained immediately from Proposition \ref{prop_DF} as well.
\end{proof}
\noindent This result indicates that, if the regressor vector $\phi(k)$ satisfies the FE condition, the augmented regressor matrix $\Phi(k)$ satisfies the PE condition. This is a significant result because it enables the design of RLS identification and control methods under PE condition. However, only the boundedness of the square of the augmented regressor matrix $\Phi^2(k)$ can be guaranteed, and its exact bound cannot be explicitly determined.
\begin{figure}[t]
\centering
\includegraphics[width=\columnwidth]{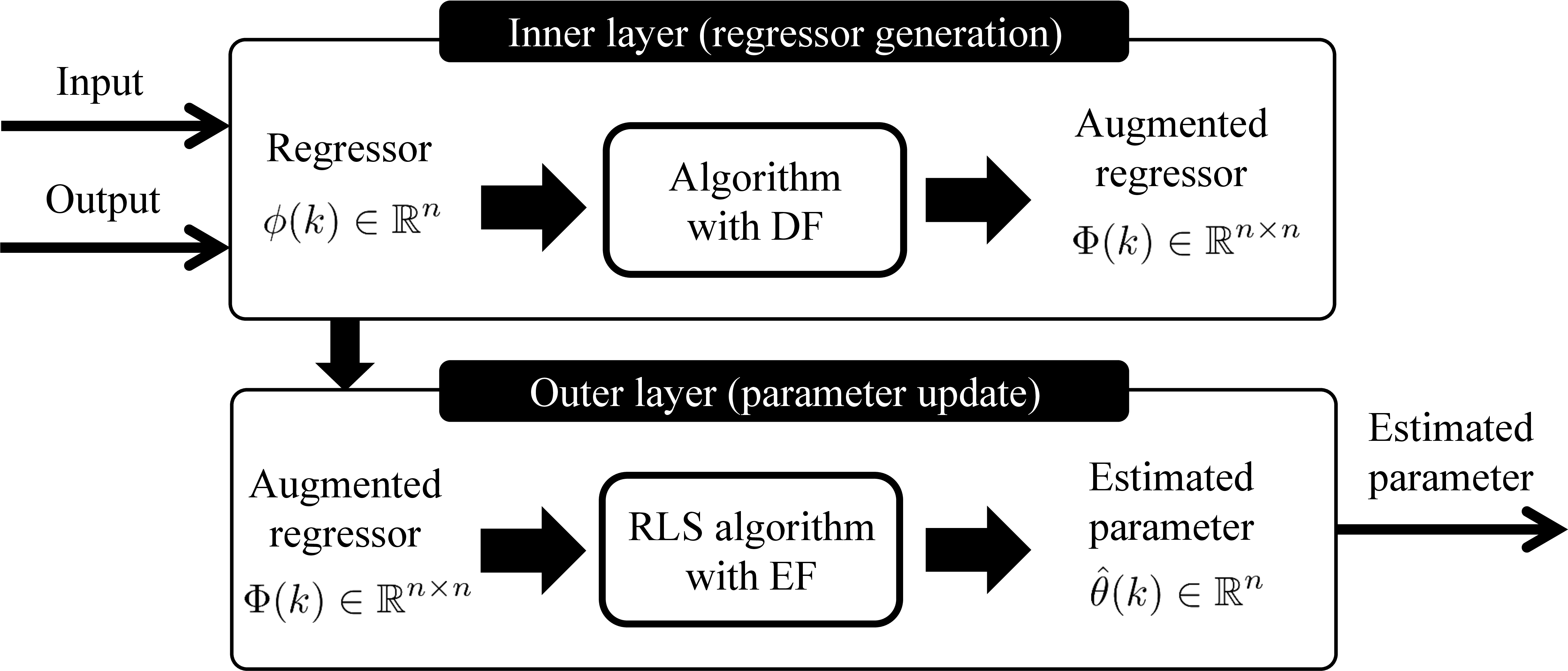}
\caption{Concept of TLF-RLS.}
\label{fig:concept_TLF}
\end{figure}
Subsequently, based on the augmented regressor matrix, we consider the cost function
\begin{align}\label{ma:cost_TLF}
J(k)
&=\sum_{i=1}^{k}\lambda^{k-i}
\bigl(\Phi(i)\hat{\theta}(k)-X(i)\bigr)^T
\bigl(\Phi(i)\hat{\theta}(k)-X(i)\bigr)\notag\\
&\quad+\lambda^{k}\bigl(\hat{\theta}(k)-\theta(0)\bigr)^T
R(0)\bigl(\hat{\theta}(k)-\theta(0)\bigr)
\end{align}
where $\lambda\in(0,1)$ represents the value of the forgetting factor referred to as the outer forgetting factor.
Then, for all $k\geq 0$, by minimizing the cost function \eqref{ma:cost_TLF} with respect to the estimated parameter $\hat{\theta}(k)$, we have
\begin{align}
\hat{\theta}(k+1)
&=\hat{\theta}(k)
- P(k)\Phi(k)N^{-1}(k)\bigl(\Phi(k)\hat{\theta}(k)-X(k)\bigr)
\label{ma:law_theta}\\
P(k+1)
&=\frac{1}{\lambda}\left(P(k)-P(k)\Phi(k)N^{-1}(k)\Phi(k)P(k)\right)
\label{ma:law_P}
\end{align}
where $P(k)\in\mathbb{R}^{n\times n}$ and $R(k)\in\mathbb{R}^{n\times n}$ represent the covariance and information matrices, respectively.
Given that $R(k)\triangleq P^{-1}(k)$, we set $P(0)=\gamma I,\ \gamma>0$.
Furthermore, $N(k)\triangleq \lambda I + \Phi(k)P(k)\Phi(k)$.

\subsection{TLF-RLS with ReEF}
Theorem \ref{thm:PE} ensures only the PE condition and does not improve the condition number of the covariance matrix. For robustness against changes in system characteristics, the minimum eigenvalue of the covariance matrix should be large, while the maximum eigenvalue should be small to avoid an estimation windup. Therefore, improving the condition number of the covariance matrix is essential for enhancing the parameter estimation performance.
The update of the information matrix for the TLF-RLS method can be expressed as 
\begin{align}\label{ma:R_temp}
R(k+1) = \lambda R(k) + \Phi^2(k)
\end{align}
Based on the previous discussion, the second term in \eqref{ma:R_temp} ensures positive definite. Thus, the positive definiteness of $R(k+1)$ is also preserved. The second term becomes a relatively small condition number because it is updated using the DF algorithm.
Hence, the primary factor causing a significant deterioration in the condition number of the information matrix is the first term based on the EF algorithm. A uniform forgetting by the outer forgetting factor accelerates parameter convergence; however, it can also rapidly degrade the condition number.
In contrast, the DF algorithm is not suitable as an outer forgetting factor because the DF-RLS method is only uniformly stable even under PE condition and cannot guarantee convergence to the true parameter values.
To address this issue, we propose a novel reconfiguration EF (ReEF) as the outer forgetting factor, which applies stronger forgetting in directions associated with larger data values and weaker forgetting in directions associated with smaller data values, thereby achieving a small condition number.
First, we must extend the forgetting factor to the matrix form.
Referring to \cite{FF_matrix} and noting that the regressor matrix is square, we propose the following RLS formulation with a matrix-forgetting factor via an augmented regressor matrix:,
\begin{align}
\hat{\theta}(k+1)
&= \hat{\theta}(k)- L(k)\Phi(k)\bar{N}^{-1}(k)
\bigl(\Phi(k)\hat{\theta}(k) - X(k)\bigr),\label{ma:ReEF_theta}\\
P(k+1)
&= L(k)
- L(k)\Phi(k)\bar{N}^{-1}(k)\Phi(k)L(k),\label{ma:ReEF_P}\\
L(k+1)
&= B(k+1)P(k+1)B^T(k+1),\label{ma:ReEF_L}
\end{align}
where $\bar{N}(k)\triangleq I + \Phi(k)L(k)\Phi(k)$, $B(k)\in\mathrm{R}^{n\times n}$ represents the forgetting matrix. Furthermore, $L(k)$ represents the modified covariance matrix, whose symmetry is preserved by multiplication with $B(k)$ on both sides.
The proposed method suppresses the estimation windup by appropriately designing $B(k)$. Considering the direction of data addition to improve the condition number, unlike DF or VDF, is not necessary because the outer layer updates the information matrix by adding a regressor matrix that always satisfies the PE condition.
Therefore, we focus on the covariance matrix prior to the update and evaluate the strength of available data using singular value decomposition, i.e., $P(k)=U(k)\Sigma_p(k)U^T(k)$, where $U(k)\in\mathbb{R}^{n\times n}$ represents the orthogonal matrix and $\Sigma_p(k)=\mathrm{diag}\{p_1,\ p_2,\ \ldots,\ p_n\}$, where $p_i>0\ (i=1,\ldots,n)$.
We design an eigenvalue matrix $\Lambda(k)$ and construct a forgetting matrix $B(k)=U(k)\Lambda^{-1}(k)U^T(k)$.
As discussed previously, the forgetting factor is tuned according to the strength of the covariance matrix data. In this context, we set the minimum forgetting factor $\lambda_{\min}(k)$ and introduced a design parameter $0 < \rho < 1$ related to the decrease in the condition number.
The selection of this parameter as a part of the stability analysis is discussed in the next section.
In contrast, $\lambda_{\max}(k)$ is updated as 
\begin{align}
\lambda_{\max}(k)= \min\{\underline{\lambda}_{\max},\ \bar{\lambda}_{\max}(k)\}
\end{align}
where $\underline{\lambda}_{\max}$ represents the lower bound, which is set by the designer, and
\begin{align}\label{ma:calc_lambda_max}
\bar{\lambda}_{\max}(k)= \rho \frac{\sigma_{\max}(P(k))}{\sigma_{\min}(P(k))}\lambda_{\min}(k)
\end{align}
Further, the remaining forgetting factors $\lambda(k)$ are selected such that they do not have identical values within the range defined by $\lambda_{\min}(k)<\lambda(k)<\lambda_{\max}(k)$.
Note that if $\lambda_{\max}(k)$ becomes larger than $\lambda(k)$, it is necessary to select $\rho$ to a value closer to $1$.
In addition, if we choose $B(k) = \frac{1}{\sqrt{\lambda}}I$, it is equivalent to the conventional TLF-RLS.
These algorithms are summarized in Algorithm \ref{alg:ReEF}. Although the proposed method is simple, this is novel because it employs a two-layer structure to maintain the positive definiteness of the augmented regressor matrix and a forgetting factor to improve the condition number. 
In this framework, the hierarchical structure and the each inner/outer forgetting factor have a clear physical interpretation. Their effective integration enables adaptive updates that guarantee convergence to the true parameter values while maintaining high robustness against changes in system characteristics.

\begin{algorithm}
\caption{Reconfiguration EF algorithm}
\label{alg:ReEF}
\begin{algorithmic}
\State \textbf{Step 1}: Obtain $P(k)$

\State \textbf{Step 2}: Compute the singular value decomposition of $P(k)$ as
\[
P(k)=U(k)\Sigma_P(k)U^T(k)
\]
and identify the elements of the minimum and maximum eigenvalues.

\State \textbf{Step 3}: Set $\lambda_{\min}(k)$ and $\rho$

\State \textbf{Step 4}: Compute $\lambda_{\max}(k)$ as
\[
{\lambda}_{\max}(k)= \min\{\underline{\lambda}_{\max},\ \bar{\lambda}_{\max}(k)\}
\]
where $\underline{\lambda}_{\max}$ represents the lower bound and
\[
\bar{\lambda}_{\max}(k)= \rho \frac{\sigma_{\max}(P(k))}{\sigma_{\min}(P(k))}\lambda_{\min}(k)
\]
\State \textbf{Step 5}: If $\displaystyle\max_{1\leq i\leq n-1}(\lambda_i(k)) \geq \lambda_{n}(k)$,\\ where $\lambda_{n}(k)\triangleq \lambda_{\max}(k)$, return to \textbf{Step 3}.
\State \textbf{Step 6}: Set the eigenvalue matrix $\Lambda(k)$ according to the length of the elements of the covariance matrix, e.g.,
\[
\Lambda(k) = \mathrm{diag}\left\{\sqrt{\lambda_{\min}},\ \sqrt{\lambda},\ \sqrt{\lambda_{\max}}\right\},\ \lambda_{\min}< \lambda < \lambda_{\max}
\]

\State \textbf{Step 7}: Generate the forgetting matrix $B(k)$ as
\[
B(k)=U(k)\Lambda^{-1}(k)U^T(k)
\]
\State \textbf{Step 8}: Set $k \leftarrow k+1$ and return to \textbf{Step 1}
\end{algorithmic}
\end{algorithm}

\section{Stability analysis}
In this section, we present the boundedness of the information matrix for TLF-RLS using both the EF and ReEF algorithms.
Subsequently, we discuss the improvement in the condition number of the covariance matrix achieved by TLF-RLS using the ReEF algorithm. Finally, we demonstrate the stability of the parameter estimation error for the proposed methods.
The following result establishes the boundedness of the information matrix.
\begin{proposition}\cite{TLF-RLS}\label{prop:TRLS_R_Boundedness}
Consider Algorithm \ref{alg:gene_DF} and the update law \eqref{ma:law_P} of the covariance matrix. Assume that the regressor vector $\phi(k)$ satisfies the FE condition.
For all $k\geq k_e$, an upper bound $\alpha_{R} \in \mathbb{R}_{>0}$ and a lower bound $\beta_{R} \in \mathbb{R}_{>0}$ exist on the information matrix $R(k)$ such that
\begin{align}\label{ma:PE_Omega}
\alpha_{R} I < R(k) < \beta_{R} I.
\end{align}
where $\alpha_R = \alpha_{\Phi}$ and $\beta_R = \sigma_{\max}(R(\delta)) + \frac{1}{1-\lambda}\beta_{\Phi}$.
\end{proposition}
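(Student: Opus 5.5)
The plan is to work with the information-matrix form of the update rather than with \eqref{ma:law_P} directly. By the matrix inversion lemma, \eqref{ma:law_P} with $N(k)=\lambda I+\Phi(k)P(k)\Phi(k)$ is equivalent to the recursion \eqref{ma:R_temp}, $R(k+1)=\lambda R(k)+\Phi^2(k)$, because $\Phi(k)$ is symmetric. This is valid whenever $R(k)$ is invertible, and that holds for all $k$ since $R(0)=\gamma^{-1}I>0$ and the recursion adds a positive semidefinite term to a positive definite one. Unrolling the recursion from any reference step $k_0$ gives
\begin{align*}
R(k)=\lambda^{k-k_0}R(k_0)+\sum_{i=k_0}^{k-1}\lambda^{k-1-i}\Phi^2(i),
\end{align*}
and both bounds will be read off from this expression.

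\textbf{Upper bound.} Take $k_0=\delta$. Since $0<\lambda^{k-\delta}\le 1$, the first term is at most $\sigma_{\max}(R(\delta))I$. For the sum, Proposition \ref{prop_DF} gives $\Phi(i)<\beta_\phi I$ for $i\ge k_e$, hence $\Phi^2(i)\le\beta_\phi^2 I$. For the finitely many steps before $k_e$, $\Phi(i)$ is a finite sum of normalized terms $\phi\phi^T/m^2$, each of norm below $1$, so it is also bounded. Call the resulting uniform bound on $\Phi^2(i)$ simply $\beta_\Phi$, consistent with Theorem \ref{thm:PE}, where a per-step bound is implied because each summand is dominated by the windowed sum. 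The geometric series then gives
\begin{align*}
\sum_{i}\lambda^{k-1-i}\beta_\Phi\le\frac{\beta_\Phi}{1-\lambda},
\end{align*}
which yields $\beta_R$ as stated. The only care needed is to use the per-step bound on $\Phi^2$ rather than the windowed one. The windowed version would give the same form after grouping the sum into blocks of length $\delta+1$, at the cost of an extra factor $\lambda^{-\delta}$, so I will state the per-step bound explicitly.

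\textbf{Lower bound.} Drop the nonnegative term $\lambda^{k-k_0}R(k_0)$ and keep only the most recent summand, $R(k)\ge\Phi^2(k-1)$. For $k-1\ge k_e$, Proposition \ref{prop_DF} gives $\Phi(k-1)>\alpha_\phi I$, so $R(k)\ge\alpha_\phi^2 I>0$. Identifying this constant with $\alpha_\Phi$ requires matching constants. The windowed sum in Theorem \ref{thm:PE} satisfies $\sum_{i=\sigma}^{\sigma+\delta}\Phi^2(i)\ge\alpha_\Phi I$, and weighting it by $\lambda^{\delta}\le\lambda^{k-1-i}\le 1$ only gives $R(k)\ge\lambda^{\delta}\alpha_\Phi I$. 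To obtain exactly $\alpha_R=\alpha_\Phi$, I would define $\alpha_\Phi$ as the per-step lower bound $\alpha_\phi^2$ of $\Phi^2(k)$ for $k\ge k_e$. Theorem \ref{thm:PE} does not fix this constant explicitly, so the choice is legitimate, and then the one-term argument gives $R(k)\ge\alpha_\Phi I$ directly.

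\textbf{Main obstacle.} The expected difficulty is the boundary bookkeeping at $k=k_e$. The lower bound needs $\Phi(k-1)>0$, so it holds strictly only from $k_e+1$ onward. At $k=k_e$ itself I would instead use $R(k_e)\ge\lambda^{k_e}R(0)+\Phi^2(k_e-1)$, where $\Phi(k_e-1)$ may be singular, and take the minimum of the resulting constant with $\alpha_\Phi$, adjusting $\alpha_R$ if necessary. The strict inequalities in the statement then follow by slightly shrinking $\alpha_R$ and enlarging $\beta_R$.
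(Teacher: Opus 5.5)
Your proof is correct, but it takes a different route from the paper's. The paper does not unroll the recursion. It follows the standard EF-RLS argument for PE regressors: it sums $R(k)=\lambda R(k-1)+\Phi^2(k-1)$ over a window $[\sigma,\sigma+\delta]$ and uses $R(k)\ge\lambda R(k-1)$ to compare every term in the window with an endpoint. Combined with the windowed constants of Theorem~\ref{thm:PE}, this gives $R(\sigma+\delta)\ge\frac{\lambda^{\delta}(1-\lambda)}{1-\lambda^{\delta+1}}\alpha_\Phi I$ and an upper bound with coefficient $\frac{1}{1-\lambda^{\delta+1}}$ on $\beta_\Phi$. Only at the end does the paper set $\delta=0$, on the grounds that $\Phi(k)$ is positive definite at every step after $k_e$; that step is what produces the stated $\alpha_R=\alpha_\Phi$ and $\beta_R$.

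Your unrolled formula with per-step bounds is exactly what the paper's argument collapses to at $\delta=0$. You therefore reach the same constants in two lines and avoid the paper's more delicate index manipulations. The windowed route has the merit of needing only the PE property of Theorem~\ref{thm:PE}, not per-step positive definiteness. Your weighting remark $R(k)\ge\lambda^{\delta}\alpha_\Phi I$ is a simpler and slightly sharper version of that windowed lower bound.

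Your explicit reading of $\alpha_\Phi$ as a per-step constant, and your separate treatment of $k=k_e$, make precise what the paper's ``set $\delta=0$'' leaves implicit. The boundary adjustment is genuinely needed. The ranges of $\Phi(0),\dots,\Phi(k_e-1)$ are nested in the proper subspace $\mathrm{range}\,\Phi(k_e-1)$, so in any direction orthogonal to it $R(k_e)$ retains only $\lambda^{k_e}R(0)=\lambda^{k_e}\gamma^{-1}I$.

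One aside needs correcting. Grouping the unrolled sum into windows of length $\delta+1$ gives $\beta_\Phi/(1-\lambda^{\delta+1})\le\beta_\Phi/(1-\lambda)$, which is the paper's pre-collapse coefficient. It does not introduce an extra factor $\lambda^{-\delta}$, so the windowed upper bound costs nothing.
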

\begin{proof}
See \cite{TLF-RLS} or Appendix.
\end{proof}
\noindent It can be immediately shown that the boundedness of both the information matrix and the covariance matrix in TLF-RLS using ReEF is guaranteed.
\begin{corollary}\label{coro:boundedneses}
Consider Algorithms \ref{alg:gene_DF}, \ref{alg:ReEF} and the update law \eqref{ma:ReEF_P} of the covariance matrix. Assume that the regressor vector $\phi(k)$ satisfies the FE condition.
For all $k\geq k_e$, the information matrix $R(k)$ remains bounded.
\end{corollary}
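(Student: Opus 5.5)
We derive the information-matrix recursion for TLF-RLS with ReEF and then argue exactly as in Proposition \ref{prop:TRLS_R_Boundedness}. Applying the matrix inversion lemma to \eqref{ma:ReEF_P} with $\bar N(k)=I+\Phi(k)L(k)\Phi(k)$ gives
\begin{align*}
R(k+1)=L^{-1}(k)+\Phi^2(k).
\end{align*}
From \eqref{ma:ReEF_L} and $B(k)=U(k)\Lambda^{-1}(k)U^T(k)$, where $U(k)$ diagonalizes $P(k)$, we get
\begin{align*}
L^{-1}(k)=B^{-T}(k)R(k)B^{-1}(k)=U(k)\Lambda(k)\Sigma_P^{-1}(k)\Lambda(k)U^T(k),
\end{align*}
since $\Lambda(k)$ and $\Sigma_P(k)$ are diagonal in the same basis (the indexing in \eqref{ma:ReEF_L} is read so that $B$ is built from the eigenbasis of the covariance it multiplies). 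Thus $L^{-1}(k)=U(k)\,\mathrm{diag}\{\lambda_i(k)/p_i(k)\}\,U^T(k)$, so the outer forgetting merely rescales each eigenvalue of $R(k)$ by its own factor $\lambda_i(k)\in[\lambda_{\min}(k),\lambda_{\max}(k)]$. In particular,
\begin{align*}
\lambda_{\min}(k)R(k)\le L^{-1}(k)\le \lambda_{\max}(k)R(k).
\end{align*}

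\emph{Lower bound.} For $k\ge k_e$, $L^{-1}(k)\ge 0$ and, by Proposition \ref{prop_DF}, $\Phi^2(k)>\alpha_\phi^2 I$. Hence $R(k+1)>\alpha_\phi^2 I$, which gives a uniform lower bound $\alpha_R$ with the same structure as in Proposition \ref{prop:TRLS_R_Boundedness}.

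\emph{Upper bound.} By construction $\lambda_{\max}(k)\le\underline{\lambda}_{\max}$. Provided the designer keeps this cap strictly below $1$, which the algorithm's requirement $\lambda_i\in(0,1)$ suggests, we get
\begin{align*}
\sigma_{\max}(R(k+1))\le \underline{\lambda}_{\max}\,\sigma_{\max}(R(k))+\beta_\phi^2 .
\end{align*}
Iterating this scalar inequality from $k_e$ gives
\begin{align*}
R(k)<\Bigl(\sigma_{\max}(R(k_e))+\frac{\beta_\phi^2}{1-\underline{\lambda}_{\max}}\Bigr)I
\end{align*}
for all $k\ge k_e$. This mirrors $\beta_R$ in Proposition \ref{prop:TRLS_R_Boundedness}, with $\lambda$ replaced by the cap $\underline{\lambda}_{\max}$. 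Boundedness of $P(k)=R^{-1}(k)$ then follows by inversion.

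\emph{Main obstacle.} The delicate step is making the eigenbasis alignment rigorous and justifying the uniform cap $\underline{\lambda}_{\max}<1$. Steps 3 to 5 of Algorithm \ref{alg:ReEF} reselect $\lambda_{\min}(k)$ and $\rho$, so I would state explicitly that every admissible choice keeps $0<\lambda_{\min}(k)\le\lambda_i(k)\le\lambda_{\max}(k)\le\underline{\lambda}_{\max}<1$. If the cap could approach $1$, I would instead fall back on the PE property of Theorem \ref{thm:PE} over windows of length $\delta$. Even in that case the lower bound survives, because it depends only on $\Phi^2(k)$.
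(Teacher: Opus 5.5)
Your proof is correct and is essentially what the paper means when it calls the corollary immediate from Proposition~\ref{prop:TRLS_R_Boundedness}. The information form $R(k+1)=L^{-1}(k)+\Phi^2(k)$, combined with $\lambda_{\min}(k)R(k)\le L^{-1}(k)\le\lambda_{\max}(k)R(k)$ (the bound the paper itself uses in \eqref{ma:lambda_temp}), reduces ReEF to the EF analysis with $\lambda$ replaced by the cap $\underline{\lambda}_{\max}<1$, and your one-step pointwise recursion is the $\delta=0$ version of the appendix argument. Two small points: the paper tacitly makes the same $\underline{\lambda}_{\max}<1$ assumption you flag when it takes $\lambda^\prime_{\max}\in(0,1)$ in Theorem~\ref{thm:parameter}, and at $k=k_e$ itself the lower bound only needs $R(k_e)>0$, which holds by induction from $R(0)=\gamma^{-1}I$.
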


Next, we evaluate the upper bounds of the condition number for the information matrix to examine whether the ReEF algorithm suppressed the estimation windup more effectively.
Here, we introduce the following two assumptions.
The eigenvalue decomposition of an information matrix $R(k)$ is expressed as 
\begin{align}\label{ma:Lambda_R}
R(k)=U(k)\Sigma_r(k)U^{T}(k)
\end{align}
where $\Sigma_r(k)=\mathrm{diag}\{r_1,\ \ldots,\ r_n\}$.

\begin{assumption}\label{ass:ReEF_stability}
There exists $0 < \rho < 1$ that satisfies 
\begin{align}
\frac{\lambda_{\max}}{\lambda_{\min}}\leq \rho \frac{r_{\max}}{r_{\min}}
\end{align}
where $\lambda_{\max},\ \lambda_{\min}$ represent the maximum and minimum forgetting factors in ReEF, respectively.
\end{assumption}

\begin{assumption}\label{ass:ReEF_stability2}
Let the elements of the eigenvalue matrix be defined in the descending order as $\lambda_1, \lambda_2, \ldots, \lambda_n$.
In that case, we assume that they are independently designed as 
\begin{align}
\lambda_1 > \lambda_2 > \cdots > \lambda_n
\end{align}
\end{assumption}

\begin{remark}
Assumption \ref{ass:ReEF_stability} is a practical or mild assumption.
Under non-PE condition, the condition number in EF-RLS or TLF-RLS is significantly greater than $1$. Therefore, a sufficiently wide range of forgetting factors in $(0,1)$ can be selected. The condition number of the information matrix improves significantly under PE condition by quasi-white noise, which makes it difficult to satisfy this condition. In such cases, it is reasonable to employ EF for uniform forgetting.
Moreover, Assumption \ref{ass:ReEF_stability2} is also reasonable because the forgetting factor is a design parameter. Hence, this condition can be readily satisfied.
\end{remark}
In both EF and ReEF algorithms, the update law of the information matrix can be expressed as 
\begin{align}\label{ma:ReEF_info}
R(k+1)=\bar{R}(k)+\Phi^2(k)
\end{align}
where matrix $\bar{R}(k)$ is obtained using the forgetting factor algorithm to $R(k)$.
Then, the following proposition holds.
\begin{proposition}\label{prop:cond}
Consider the parameter update law \eqref{ma:ReEF_theta} based on Algorithm \ref{alg:ReEF} under Assumptions \ref{ass:ReEF_stability} and \ref{ass:ReEF_stability2}.
Then, there exists $a\in(0,1)$ such that the following condition holds.
\begin{align}
\kappa(\bar{R}(k))\leq a \kappa(R(k))
\end{align}
where $\kappa(\cdot)$ represents the condition number.
\end{proposition}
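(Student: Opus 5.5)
The plan is to compute $\bar{R}(k)$ explicitly from the ReEF update and read its eigenvalues off directly. From \eqref{ma:ReEF_L}, the modified covariance is $L(k)=B(k)P(k)B^T(k)$ with $B(k)=U(k)\Lambda^{-1}(k)U^T(k)$. Because $B(k)$ is built from the eigenvectors $U(k)$ of $P(k)$, $L(k)$ is simultaneously diagonalized:
\begin{align*}
L(k)=U(k)\Lambda^{-1}(k)\Sigma_p(k)\Lambda^{-1}(k)U^T(k).
\end{align*}
By the matrix inversion lemma applied to \eqref{ma:ReEF_P}, the information update takes the form \eqref{ma:ReEF_info} with $\bar{R}(k)=L^{-1}(k)$. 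This gives
\begin{align*}
\bar{R}(k)=U(k)\Lambda^2(k)\Sigma_r(k)U^T(k),
\end{align*}
using $\Sigma_r=\Sigma_p^{-1}$ and the same $U(k)$ as in \eqref{ma:Lambda_R}. Hence the eigenvalues of $\bar{R}(k)$ are $\lambda_i r_i$, where each forgetting factor is paired with the eigenvalue of $R(k)$ along the same eigenvector.

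The next step is to use the ordering prescribed in Step 6 of Algorithm \ref{alg:ReEF}, justified by Assumption \ref{ass:ReEF_stability2}. The largest forgetting factor (weakest forgetting) is assigned to the direction of the largest $p_i$, i.e.\ the smallest $r_i$. The smallest forgetting factor is assigned to the direction of $r_{\max}$. Then $\lambda_{\min}r_{\max}$ and $\lambda_{\max}r_{\min}$ are the natural candidates for the extreme eigenvalues of $\bar{R}(k)$, and I would aim for the bound
\begin{align*}
\kappa(\bar{R}(k))\leq \frac{\lambda_{\min}\,r_{\max}}{\lambda_{\max}\,r_{\min}} = \frac{\lambda_{\min}}{\lambda_{\max}}\,\kappa(R(k)).
\end{align*}
This requires bounding $\max_i \lambda_i r_i$ and $\min_i\lambda_i r_i$ for the intermediate indices. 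Since both $\lambda_i$ and $r_i$ are monotone but in opposite orders, I would argue that every $\lambda_i r_i$ lies between the two extreme products. This needs the intermediate $\lambda_i$ to be chosen compatibly with the $r_i$, for example $\lambda_i r_i$ interpolating between the extremes. Where that fails I would instead bound $\max_i\lambda_i r_i\le \lambda_{\max}r_{\max}$ and $\min_i\lambda_i r_i\geq\lambda_{\min}r_{\min}$, and fall back on Assumption \ref{ass:ReEF_stability} to control the ratio.

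The final step sets $a=\lambda_{\min}/\lambda_{\max}$, or, in the fallback, a constant built from $\rho$. Since $\lambda_{\min}<\lambda_{\max}$ strictly by Assumption \ref{ass:ReEF_stability2}, we get $a\in(0,1)$. The cap $\lambda_{\max}(k)=\min\{\underline{\lambda}_{\max},\bar{\lambda}_{\max}(k)\}$ together with \eqref{ma:calc_lambda_max} gives $\lambda_{\max}/\lambda_{\min}\le\rho\,\kappa(R(k))$, consistent with Assumption \ref{ass:ReEF_stability}. This guarantees that the products stay ordered, so that $\lambda_{\max}r_{\min}\le \lambda_{\min}r_{\max}$ up to the factor $\rho$ and the reordering does not invert the extremes. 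It also yields a uniform $a$ via the designer's bounds on $\lambda_{\min}$ and $\underline{\lambda}_{\max}$.

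The main obstacle is the intermediate-eigenvalue step: guaranteeing that no middle product $\lambda_i r_i$ escapes the interval $[\lambda_{\max}r_{\min},\lambda_{\min}r_{\max}]$. I would first try to exploit the ordering of Assumption \ref{ass:ReEF_stability2} directly. If that is insufficient, I would use Assumption \ref{ass:ReEF_stability} to obtain the cruder bound
\begin{align*}
\kappa(\bar{R})\le \frac{\lambda_{\max}r_{\max}}{\lambda_{\min}r_{\min}}.
\end{align*}
I would then combine it with the capping rule to recover a contraction factor strictly below one.
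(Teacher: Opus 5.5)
Your computation $\bar R(k)=L^{-1}(k)=U(k)\Lambda^2(k)\Sigma_r(k)U^T(k)$ is correct. It gives eigenvalues $\lambda_{n+1-i}r_i$ under the anti-monotone pairing, and the paper simply asserts this product form. You are also right that Assumption~\ref{ass:ReEF_stability} keeps the two extreme products from inverting.

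The step you flag as the main obstacle is a genuine gap, and neither of your routes closes it. Your primary target $\kappa(\bar R)\le(\lambda_{\min}/\lambda_{\max})\,\kappa(R)$ is false for $n\ge 3$ under Assumptions~\ref{ass:ReEF_stability} and~\ref{ass:ReEF_stability2}. Take $r_1=r_2=100$, $r_3=1$ and $\lambda_1=0.9>\lambda_2=0.5>\lambda_3=0.01$, so Assumption~\ref{ass:ReEF_stability} holds with $\rho=0.9$. The eigenvalues of $\bar R$ are then $1$, $50$, $0.9$, so $\kappa(\bar R)\approx 55.6$, whereas $(\lambda_{\min}/\lambda_{\max})\kappa(R)\approx 1.1$. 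More generally, whenever $r_1=r_2$ the middle product $\lambda_{n-1}r_2$ exceeds $\lambda_n r_1$, so intermediate products are not confined between the extremes.

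Your fallback fails for a structural reason. The bound $\kappa(\bar R)\le \lambda_{\max}r_{\max}/(\lambda_{\min}r_{\min})=(\lambda_{\max}/\lambda_{\min})\,\kappa(R)$ has a prefactor strictly greater than $1$ by Assumption~\ref{ass:ReEF_stability2}. Invoking Assumption~\ref{ass:ReEF_stability} only turns it into $\rho\,\kappa(R)^2$. No capping rule can extract $a<1$ from it, because the bound pairs the largest forgetting factor with the largest $r_i$, which the assignment forbids.

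The missing idea is to bound every pairwise ratio of eigenvalues of $\bar R$, splitting by which of the two factors can exceed one. For $i<j$ you have $r_i/r_j\le\kappa(R)$, but $\lambda_{n+1-i}/\lambda_{n+1-j}\le\gamma\triangleq\max_{l}\lambda_{l+1}/\lambda_l<1$ by the strict ordering. For $i>j$ you have $r_i/r_j\le 1$, which you discard, and $\lambda_{n+1-i}/\lambda_{n+1-j}\le\lambda_{\max}/\lambda_{\min}\le\rho\,\kappa(R)$ by Assumption~\ref{ass:ReEF_stability}. Maximizing over all pairs gives $\kappa(\bar R)\le\max\{\gamma,\rho\}\,\kappa(R)$, which is the paper's two-case argument. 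In the example above, $a=\max\{0.56,0.9\}=0.9$, and indeed $55.6\le 90$.
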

\begin{proof}
For simplicity, we arrange the eigenvalues of the information matrix and forgetting factors in the descending order. According to \eqref{ma:Lambda_R} and Assumption \ref{ass:ReEF_stability2}, the forgetting factors are distinct.
\begin{align}
r_1\geq r_2\geq \cdots\geq r_n,\ \lambda_1 > \lambda_2 > \cdots > \lambda_n\label{ma:rlambda}
\end{align}
In this case, the eigenvalues of the information matrix and corresponding forgetting factors can be expressed in the product form $\lambda_{n+1-i} r_i$ ($i=1,\ldots,\ n$).
The objective is to satisfy the following conditions.
\begin{align}
\exists\ a\in(0,1)\ \mathrm{s.t.}\ \max_{i,\ j=1,\ldots,\ n}\frac{\lambda_{n+1-i}r_i}{\lambda_{n+1-j}r_j}\leq a\frac{r_1}{r_n}\label{ma:goal_cond}
\end{align}
We consider two cases: (a) $i < j$ and (b) $i > j$. Note that the case $i = j$ is excluded. \\
\textit{(a) $i < j\ \ $}From \eqref{ma:rlambda}, it is clear that $r_i \geq r_j$ and $\frac{\lambda_{n+1-i}}{\lambda_{n+1-j}}<1$ holds.
Then, we have
\begin{align}
\frac{\lambda_{n+1-i}r_i}{\lambda_{n+1-j}r_j}\leq \frac{\lambda_{n+1-i}}{\lambda_{n+1-j}}\frac{r_1}{r_n}=\gamma\frac{r_1}{r_n}
\end{align}
where $\gamma\triangleq \frac{\lambda_{n+1-i}}{\lambda_{n+1-j}}\in (0,1)$.\\
\textit{(b) $i > j\ \ $} From \eqref{ma:rlambda}, it is clear that $\frac{r_i}{r_j}\leq 1$ and $\frac{\lambda_{n+1-i}}{\lambda_{n+1-j}}\leq \frac{\lambda_1}{\lambda_n}$ holds.
Then, from Assumption \ref{ass:ReEF_stability}, we have 
\begin{align}
\frac{\lambda_{n+1-i}r_i}{\lambda_{n+1-j}r_j}\leq \frac{\lambda_{n+1-i}}{\lambda_{n+1-j}} \leq \frac{\lambda_1}{\lambda_n}\leq \rho\frac{r_1}{r_n}
\end{align}
Therefore, from (a) and (b), there exists an $\alpha=\max\{\gamma,\ \rho\}\in(0,1)$ that satisfies \eqref{ma:goal_cond}.
\end{proof}

\begin{theorem}\label{thm:cond}
Consider the parameter update law \eqref{ma:ReEF_theta} based on Algorithm \ref{alg:ReEF} under Assumptions \ref{ass:ReEF_stability} and \ref{ass:ReEF_stability2}.
Then, the upper bound on the condition number of the covariance matrix $\kappa_{\mathrm{ReEF}}(R(k+1))$ is uniformly ultimately bounded within the region satisfying Assumption \ref{ass:ReEF_stability}.
\end{theorem}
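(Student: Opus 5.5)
The plan is to derive a scalar recursion for the condition number from the information-matrix update \eqref{ma:ReEF_info}, $R(k+1)=\bar{R}(k)+\Phi^2(k)$, and then show that this recursion is a contraction plus a bounded forcing term. The forgetting step is handled by Proposition \ref{prop:cond}, the added data term by Proposition \ref{prop_DF} and Theorem \ref{thm:PE}, and absolute scale by Corollary \ref{coro:boundedneses}.

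\textbf{Step 1 (bounding the effect of the addition).} For the sum of two symmetric positive definite matrices, Weyl's inequalities give
\begin{align*}
\sigma_{\max}(R(k+1))&\leq \sigma_{\max}(\bar{R}(k))+\sigma_{\max}(\Phi^2(k)),\\
\sigma_{\min}(R(k+1))&\geq \sigma_{\min}(\bar{R}(k))+\sigma_{\min}(\Phi^2(k)).
\end{align*}
For $k\geq k_e$, Proposition \ref{prop_DF} gives $\alpha_\phi^2 I<\Phi^2(k)<\beta_\phi^2 I$. Since $\Phi^2(k)$ has condition number at most $c\triangleq\beta_\phi^2/\alpha_\phi^2$, the mediant inequality yields
\begin{align*}
\kappa(R(k+1))\leq \frac{\sigma_{\max}(\bar R(k))+\beta_\phi^2}{\sigma_{\min}(\bar R(k))+\alpha_\phi^2}\leq \max\{\kappa(\bar R(k)),\,c\}.
\end{align*}
The second inequality holds because a ratio of sums of positive terms lies between the individual ratios.

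\textbf{Step 2 (combining with the contraction).} Proposition \ref{prop:cond} gives $\kappa(\bar R(k))\leq a\,\kappa(R(k))$ with $a\in(0,1)$. Substituting into Step 1,
\begin{align*}
\kappa(R(k+1))\leq \max\{a\,\kappa(R(k)),\,c\}.
\end{align*}
Iterating from $k_e$ gives $\kappa(R(k))\leq \max\{a^{k-k_e}\kappa(R(k_e)),\,c\}$. Hence, after a finite number of steps $T$, depending only on $\kappa(R(k_e))$ and $c$, the condition number enters and stays in the ball of radius $c+\epsilon$. This is uniform ultimate boundedness, with the ultimate bound independent of $k$.

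\textbf{Step 3 (the case where Assumption \ref{ass:ReEF_stability} fails).} Assumption \ref{ass:ReEF_stability} requires $\kappa(R(k))\geq \lambda_{\max}/(\rho\lambda_{\min})$. When this fails, Algorithm \ref{alg:ReEF} effectively reverts to near-uniform forgetting, with $\lambda_{\max}$ clipped by $\underline{\lambda}_{\max}$. In that regime $\kappa(\bar R(k))\leq (\lambda_{\max}/\lambda_{\min})\kappa(R(k))$, which is itself bounded by a constant through the failed inequality. So the condition number stays below $\max\{c,\ \lambda_{\max}/(\rho\lambda_{\min})\}$ up to a bounded factor. This is what the statement means by ``within the region satisfying Assumption \ref{ass:ReEF_stability}''. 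For the scale, Corollary \ref{coro:boundedneses} ensures that the eigenvalues themselves remain in a compact positive interval, so none of the ratios degenerate. Because $\kappa(P)=\kappa(R)$, the same bound transfers directly to the covariance matrix.

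\textbf{Main obstacle.} The delicate point is that Proposition \ref{prop:cond} contracts only the \emph{ratio}: the forgetting step may shrink $\sigma_{\min}(\bar R)$, so Step 1 must not use absolute eigenvalue bounds, only the mediant argument. I would verify carefully that the mediant bound does not need any relation between the scale of $\bar R(k)$ and that of $\Phi^2(k)$. I would also check that the constant $a$ from Proposition \ref{prop:cond}, which is built from $\gamma=\lambda_{n+1-i}/\lambda_{n+1-j}$ and $\rho$, can be taken uniform in $k$. I would try to secure this by fixing the design grid of forgetting factors so that consecutive ratios are bounded away from $1$ by a $k$-independent margin.
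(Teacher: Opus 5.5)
Your argument is correct. Its skeleton matches the paper's: Weyl applied to \eqref{ma:ReEF_info}, Proposition \ref{prop:cond} for the forgetting step, then iteration. The key inequality, however, is genuinely different.

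The paper keeps only the crude bound $\sigma_{\min}(R(k+1))>\sigma_{\min}(\bar R(k))$ and splits additively as in \eqref{ma:upper_kappa}. This gives the affine recursion $\kappa(R(k+1))<a\,\kappa(R(k))+M$, where $M$ uniformly bounds $\sigma_{\max}(\Phi^2(k))/\sigma_{\min}(\bar R(k))$. That needs an \emph{absolute} lower bound on $\bar R(k)$. For ReEF this amounts to $\sigma_{\min}(\bar R(k))\geq\lambda_{\min}(k)\,\sigma_{\min}(R(k))\geq\lambda_{\min}(k)\,\alpha_\phi^2$, hence to $\lambda_{\min}(k)$ being bounded away from zero. (The paper justifies $M$ by citing Proposition \ref{prop:TRLS_R_Boundedness}, which is stated for EF.)

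You instead use both Weyl bounds and the mediant inequality. This needs only the scale-free constant $c=(\beta_\phi/\alpha_\phi)^2\geq\kappa(\Phi^2(k))$ and gives the max-type recursion $\kappa(R(k+1))\leq\max\{a\,\kappa(R(k)),c\}$. What you gain is a sharper ultimate bound: $c$ instead of $M/(1-a)$, where the natural estimate of $M$ is $c/\lambda_{\min}$. Your bound is independent of $a$, of $\lambda_{\min}$, and of the absolute scale of $R$. The paper's route is the more standard ``contraction plus bounded input'' template and is slightly shorter. Starting the iteration at $k_e$ rather than $0$ is also the right choice, since $\Phi^2(k)$ is positive definite only from $k_e$ on.

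On the worries you flagged:
\begin{itemize}
\item The mediant inequality $\frac{x+y}{u+v}\leq\max\{x/u,\,y/v\}$ holds for arbitrary positive numbers. So no relation between the scales of $\bar R(k)$ and $\Phi^2(k)$ is needed.
\item Uniformity of $a$ in $k$ is a real issue, but the paper's proof has it equally: it sets $\alpha=\max\{\gamma,\rho\}$ without addressing it.
\end{itemize}

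Your Step 3 is unnecessary. The paper means only that the conclusion holds while Assumption \ref{ass:ReEF_stability} holds. In fact Algorithm \ref{alg:ReEF} enforces that assumption by construction, since $\lambda_{\max}(k)\leq\bar\lambda_{\max}(k)=\rho\,\kappa(P(k))\,\lambda_{\min}(k)$. Your description of the failure regime (reverting to near-uniform forgetting, with clipping by $\underline{\lambda}_{\max}$) does not match the algorithm. The inequality you use there does hold for any admissible factors, though, and would give a valid global extension provided $\lambda_{\max}/\lambda_{\min}$ is uniformly bounded.
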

\begin{proof}
First, we derive the upper bound on the condition number of the covariance matrix.
Using Weyl's inequality (see Lemma \ref{lem:weyl} in Appendix), the update law of the information matrix \eqref{ma:ReEF_info} is evaluated as
\begin{align}
\sigma_{\max}(R(k+1))\leq\sigma_{\max}(\bar{R}(k))+\sigma_{\max}(\Phi^2(k))
\end{align}
where $\sigma_{\min}(\cdot),\ \sigma_{\max}(\cdot)$ represent the minimum and maximum eigenvalues, respectively.
In addition, we have
\begin{align}
\sigma_{\min}(\bar{R}(k))<\sigma_{\min}(R(k+1))
\end{align}
for the positive definiteness of $\Phi(k)$.
Then, we consider the condition number and its upper bound as
\begin{align}  
\kappa_{\mathrm{ReEF}}(R(k+1))  
&= \frac{\sigma_{\max}(R(k+1))}{\sigma_{\min}(R(k+1))} \\  
&< \frac{\sigma_{\max}(\bar{R}(k)) + \sigma_{\max}(\Phi^2(k))}{\sigma_{\min}(\bar{R}(k))} \\  
&= \kappa_{\mathrm{ReEF}}(\bar{R}(k)) + \frac{\sigma_{\max}(\Phi^2(k))}{\sigma_{\min}(\bar{R}(k))}
\label{ma:upper_kappa}
\end{align}
From Proposition \ref{prop:cond}, under Assumptions
\ref{ass:ReEF_stability} and \ref{ass:ReEF_stability2}, it follows that $\kappa(\bar{R}(k+1)) \leq \alpha \kappa(R(k))$.
Hence, the upper bound of the condition number can be rewritten as
\begin{align}
\kappa_{\mathrm{ReEF}}(R(k+1)) < \alpha\kappa_{\mathrm{ReEF}}(R(k)) + \frac{\sigma_{\max}(\Phi^2(k))}{\sigma_{\min}(\bar{R}(k))}\label{ma:upper_ReEF_R}
\end{align}
The second term on the right-hand side is bounded by Theorem \ref{thm:PE} and Proposition \ref{prop:TRLS_R_Boundedness}.
We define its positive upper bound $M\geq 0$ as ${\sigma_{\max}(\Phi^2(k))}/{\sigma_{\min}(\bar{R}(k))}\leq M$.
Subsequently, under Assumption \ref{ass:ReEF_stability}, we have
\begin{align}
\kappa_{\mathrm{ReEF}}(R(k))< \alpha^{k} \kappa_{\mathrm{ReEF}}(R(0))+\frac{M}{1-\alpha}(1-\alpha^k)
\end{align}
Hence, we have
\begin{align}
\kappa_{\mathrm{ReEF}}(R(k))<\max\left\{\kappa_{\mathrm{ReEF}}(P(0)),\ \frac{M}{1-\alpha}\right\}\label{ma:Kappa}
\end{align}
Therefore, $\kappa_{\mathrm{ReEF}}(R(k))$ is uniformly ultimately  bounded in the region satisfying $\frac{\lambda_{\max}}{\lambda_{\min}}\leq\rho\frac{r_{\max}}{r_{\min}}$.
\end{proof}

\begin{remark}
In TLF-RLS with the EF algorithm, similar to \eqref{ma:upper_kappa} and the fact that $\kappa(\bar{R}(k+1)) = \kappa(R(k))$ caused by the uniform forgetting property of EF, the upper bound of the condition number of the information matrix can be expressed as 
\begin{align}
\kappa_{\mathrm{EF}}(R(k+1)) < \kappa_{\mathrm{EF}}(R(k)) + \frac{\sigma_{\max}(\Phi^2(k))}{\lambda r_{\min}(k))}
\end{align}
Hence, its uniform ultimate boundedness is not guaranteed because there are no terms that reduce the upper bound of the condition number compared to the ReEF algorithm. 
Note that its boundedness is guaranteed by Proposition \ref{prop:TRLS_R_Boundedness}.
\end{remark}

\begin{remark}
As mentioned earlier, $\lambda_{\max}(k)$ can be determined automatically, for example, using \eqref{ma:calc_lambda_max}.
This is designed to satisfy Assumption \ref{ass:ReEF_stability}. The value of $\rho$ depends on $\alpha$, which determines the ultimate boundedness, and therefore, it must be selected from the interval $(0,1)$.
\end{remark}

Finally, the stability of the parameter error of the proposed method is proved via Lyapunov stability analysis based on the boundedness of the information matrix.
\begin{theorem}\label{thm:parameter}
Consider Algorithms \ref{alg:gene_DF}, \ref{alg:ReEF} and the parameter update law \eqref{ma:ReEF_theta}.
Assume that the regressor vector $\phi(k)$ satisfies the FE condition.
Then, parameter error $\tilde{\theta}(k)$ globally exponentially converges to zero independent of the forgetting factor of the ReEF algorithm.
\end{theorem}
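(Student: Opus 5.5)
The plan is a Lyapunov argument built on the information matrix, using the boundedness already established in Proposition \ref{prop:TRLS_R_Boundedness} and Corollary \ref{coro:boundedneses}. First rewrite the error dynamics. From \eqref{ma:ReEF_theta} and $X(k)=\Phi(k)\theta$ (which holds exactly for the noise-free model \eqref{ma:LPM}, also after DF forgetting, since $X$ and $\Phi$ are updated by the same linear map), we get
\begin{align*}
\tilde{\theta}(k+1)=\bigl(I-L(k)\Phi(k)\bar{N}^{-1}(k)\Phi(k)\bigr)\tilde{\theta}(k).
\end{align*}
By the matrix inversion lemma, $I-L\Phi\bar{N}^{-1}\Phi = P(k+1)L^{-1}(k)$. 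Hence
\begin{align*}
R(k+1)\tilde{\theta}(k+1)=L^{-1}(k)\tilde{\theta}(k)=B^{-T}(k)R(k)B^{-1}(k)\tilde{\theta}(k),
\end{align*}
where $R(k+1)=L^{-1}(k)+\Phi^2(k)$ and $L^{-1}(k)=\bar{R}(k)$ in the notation of \eqref{ma:ReEF_info}.

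Next take the candidate $V(k)=\tilde{\theta}^T(k)R(k)\tilde{\theta}(k)$, defined for $k\ge k_e$. A standard RLS computation gives
\begin{align*}
V(k+1)=\tilde{\theta}^T(k)\bigl(\bar{R}(k)-\bar{R}(k)R^{-1}(k+1)\bar{R}(k)\bigr)\tilde{\theta}(k)\le \tilde{\theta}^T(k)\bar{R}(k)\tilde{\theta}(k).
\end{align*}
Since $B(k)=U\Lambda^{-1}U^T$ shares eigenvectors with $P(k)$, the matrix $\bar{R}(k)=U\Lambda\Sigma_r\Lambda U^T$ satisfies $\bar{R}(k)\le \lambda_{\max}(k)R(k)$, with $\lambda_{\max}(k)\le\underline{\lambda}_{\max}<1$. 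This yields $V(k+1)\le \underline{\lambda}_{\max}V(k)$. The choice of $\lambda_{\max}(k)=\min\{\underline{\lambda}_{\max},\bar{\lambda}_{\max}(k)\}$ is exactly what makes this contraction factor uniform in $k$.

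To conclude exponential convergence, sandwich $V$ using Proposition \ref{prop:TRLS_R_Boundedness} and Corollary \ref{coro:boundedneses}: $\alpha_R\|\tilde{\theta}\|^2\le V\le\beta_R\|\tilde{\theta}\|^2$ for $k\ge k_e$. This gives
\begin{align*}
\|\tilde{\theta}(k)\|\le\sqrt{\beta_R/\alpha_R}\,\underline{\lambda}_{\max}^{(k-k_e)/2}\|\tilde{\theta}(k_e)\|.
\end{align*}
For $k<k_e$, the transient is finite. The error map is a product of contractions in the $R$-weighted norm with $R(k)\ge\lambda_{\min}^kR(0)>0$, so $\|\tilde{\theta}(k_e)\|\le c\|\tilde{\theta}(0)\|$ for some constant $c$. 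The bound is therefore global.

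The claim of independence from the forgetting factor has two parts. The decay rate can be bounded independently of the forgetting factor: one could replace $\underline{\lambda}_{\max}$ by using the $\Phi^2$ term, since $V(k+1)\le \tilde{\theta}^T\bar{R}(\bar{R}+\Phi^2)^{-1}\Phi^2\ldots$, which gives a factor $1-\alpha_\Phi/\beta_R$-type contraction. The convergence to zero holds for any admissible choice because of the PE property of $\Phi$ from Theorem \ref{thm:PE}.

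The main obstacle is making $\alpha_R,\beta_R$ uniform under the time-varying, matrix-valued forgetting of ReEF. Corollary \ref{coro:boundedneses} is only asserted, so I would re-derive it. The lower bound follows from $R(k+1)\ge\Phi^2(k)\ge\alpha_\phi^2 I$. For the upper bound I would use $\bar{R}\le\lambda_{\max}R$ recursively to get $\beta_R=\sigma_{\max}(R(k_e))+\beta_\phi^2/(1-\underline{\lambda}_{\max})$.
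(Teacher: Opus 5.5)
Your proposal is correct and takes essentially the same route as the paper. It uses the same Lyapunov function $V=\tilde{\theta}^TR\tilde{\theta}$, the same identity $R(k+1)\tilde{\theta}(k+1)=L^{-1}(k)\tilde{\theta}(k)$, the same bound $V(k+1)\le\tilde{\theta}^T(k)L^{-1}(k)\tilde{\theta}(k)$, and the same estimate $L^{-1}(k)\le\lambda_{\max}(k)R(k)$. Your remark that $B(k)$ shares eigenvectors with $R(k)$ makes explicit the commutation the paper uses tacitly, and your treatment of $k<k_e$ and direct re-derivation of Corollary~\ref{coro:boundedneses} are welcome refinements.

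One small slip: the exact identity is $V(k+1)=\tilde{\theta}^T(k)\bar{R}(k)R^{-1}(k+1)\bar{R}(k)\tilde{\theta}(k)=\tilde{\theta}^T(k)\bigl(\bar{R}(k)-\Phi(k)\bar{N}^{-1}(k)\Phi(k)\bigr)\tilde{\theta}(k)$, not $\bar{R}-\bar{R}R^{-1}(k+1)\bar{R}$ as you wrote, although the inequality you draw from it is still right. Also, your aside about a forgetting-factor-independent decay rate is not needed. It is not established as written either, since your own $\beta_R$ depends on $\underline{\lambda}_{\max}$, and the theorem only asks for exponential convergence for every admissible choice of factors.
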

\begin{proof}
Consider the following Lyapunov function candidate.
\begin{align}\label{ma:Lyap_candidate}
V(k)=\tilde{\theta}^T(k)R(k)\tilde{\theta}(k)
\end{align}
The information matrix is bounded based on Proposition \ref{prop:TRLS_R_Boundedness} and Corollary \ref{coro:boundedneses}. Therefore, it follows that, for some $\alpha_R,\ \beta_R\in\mathbb{R}_{>0}$
\begin{align}
\alpha_R \|\tilde{\theta}(k)\|^2 \leq V(k) \leq \beta_R \|\tilde{\theta}(k)\|^2
\end{align}
regardless of the PE condition.
The modified covariance matrix $L(k)$ obtained using the forgetting matrix $B(k)$ is positive definite for all $k \ge 0$, and therefore, this matrix can be transformed as 
\begin{align}
&P(k+1) = [I-L(k)\Phi(k)\bar{N}^{-1}(k)\Phi(k)]L(k)\\
&\iff I-L(k)\Phi(k)\bar{N}^{-1}(k)\Phi(k) = P(k+1)L^{-1}(k)
\end{align}
Then, the parameter error can be rewritten as
\begin{align}
\tilde{\theta}(k+1) &= \tilde{\theta}(k) - L(k)\Phi(k)\bar{N}^{-1}(k)\Phi(k)\tilde{\theta}(k) \notag \\
&= [I - L(k)\Phi(k)\bar{N}^{-1}(k)\Phi(k)]\tilde{\theta}(k)\label{ma:LLL}\\
&= P(k+1)L^{-1}(k)\tilde{\theta}(k)\label{ma:LLL2}.
\end{align}
On the one hand, using \eqref{ma:LLL} and \eqref{ma:LLL2}, the upper bound for a candidate Lyapunov function can be evaluated as
\begin{align}
V(k+1) &= \tilde{\theta}^T(k+1)L^{-1}(k)\tilde{\theta}(k) \notag \\
&= \tilde{\theta}^T(k)[I - L(k)\Phi(k)\bar{N}^{-1}(k)\Phi(k)]^TL^{-1}(k)\tilde{\theta}(k) \notag \\
&\leq \tilde{\theta}^T(k)L^{-1}(k)\tilde{\theta}(k)\label{ma:Lyap_temp}
\end{align}
On the other hand, we have
\begin{align}
L^{-1}(k)&=B^{-T}(k)R(k)B^{-1}(k)\notag\\
&=U(k)\Lambda(k)U^{T}(k)R(k)U(k)\Lambda(k)U^T(k)\notag\\
&\leq \sigma_{\max}(\Lambda(k))^2R(k)\notag\\
&=\lambda_{\max}(k)R(k)\label{ma:lambda_temp}
\end{align}
Therefore, from \eqref{ma:Lyap_temp} and \eqref{ma:lambda_temp}, we have
\begin{align}
V(k+1)&\leq \lambda_{\max}(k)\tilde{\theta}^T(k)R(k)\tilde{\theta}(k)\notag\\
& = \lambda_{\max}(k)V(k)\notag\\
&\leq \lambda^\prime_{\max}V(k)\label{ma:Lyap_fin}
\end{align}
where $\lambda^\prime_{\max}\triangleq \max_k({\lambda_{\max}(k)})\in(0,1)$
Therefore, the global exponential stability of the parameter error is obtained.
\end{proof}

\begin{remark}
In the conventional EF-RLS method, the existence of a lower bound for the information matrix cannot be guaranteed if the regressor vector does not satisfy the PE condition. In this case, the Lyapunov function candidate in \eqref{ma:Lyap_candidate} cannot be bounded, and thus, the stability of the parameter error cannot be guaranteed.
If the forgetting factor of EF $\lambda$ is set close to $1$, the system may not remain unstable. 
However, tuning is difficult because the forgetting factor $\lambda$ is highly sensitive. Furthermore, as shown in \eqref{ma:Lyap_fin}, this is a key parameter for determining the convergence rate of the parameter.
Therefore, it should ideally be set to a value sufficiently close to zero.
The proposed method, which is TLF-RLS using the EF and ReEF algorithms, guarantees the boundedness of the information matrix independently of the value of the forgetting factor.
Hence, the convergence rate of the parameter can be tuned at the design phase by the designer, which enhances practical applicability.
\end{remark}

\begin{remark}
In TLF-RLS using the EF algorithm, the upper bound of the Lyapunov function satisfies $V(k+1)\leq \lambda V(k)$. Therefore, because $\lambda^\prime_{\max} \ge \lambda$, the convergence rate of the parameter for TLF-RLS using ReEF is slower than that for TLF-RLS using the EF algorithm.
However, this estimate is conservative because the forgetting rate differs across directions owing to matrix-based forgetting.
According to Theorem \ref{thm:cond}, the ReEF algorithm improves the condition number, suppressing the estimation windup. Hence, although there is a trade-off between the parameter convergence rate and suppression of the estimation windup, the proposed method relaxes the trade-off.
\end{remark}

\section{Simulation}
Six adaptive parameter estimation laws are compared from three perspectives to evaluate the proposed method: 1) parameter convergence rate for time-invariant and time-varying systems with parameter jumps, 2) effect of the outer layer forgetting factor in TLF-RLS, and 3) estimation windup suppression by modifying the condition number using the ReEF algorithm.
The following three simulations were conducted.\\
\textbf{Simulation 1)}$\ $Comparison of the parameter update laws of EF-RLS, DF-RLS, CL, DF-CL, and TLF-RLS.\\
\textbf{Simulation 2)}$\ $Evaluation of the convergence rate of TLF-RLS under five different outer layer forgetting factors $\lambda$.\\
   \textbf{Simulation 3)}$\ $Evaluation of the trade-off between the parameter convergence rate and estimation windup suppression using the EF and ReEF algorithms in TLF-RLS.


\subsection{Simulation condition}
We evaluated its performance on a benchmark problem using a mass-spring-damper system to illustrate the effectiveness of the proposed method. The differential equation normalized by $m$ in \cite{FF_matrix} is expressed as
\begin{align}
\ddot{x}(t)+\frac{b}{m}\dot{x}(t)+\frac{k}{m}x(t) = \frac{1}{m}F(t),
\end{align}
where $m$, $k$, $b$, and $F(t)$ represent the mass, spring constant, damping coefficient, and external force, respectively.
Furthermore, we define the normalized external force by mass, i.e., the acceleration as $u(t)=\frac{F(t)}{m}$, and we use the discretized form obtained via a zero-order hold with a sampling time of $1$ s.
Furthermore, the spring constant and damping coefficient are set as follows for the three cases to consider parameter jumps for time varying system, with the mass fixed at $m = 5$ kg: (a) $k=1$ N/m, $b=1$ Ns/m, (b) $k=10$ N/m, $b=0.01$ Ns/m, and (c) $k=0.1$ N/m, $b=10$ Ns/m.
The ARX model of the mass–spring–damper system can be expressed as
\begin{align}
  y(k+1)=\phi^T(k)\theta
\end{align}
where $\phi(k)$ represents the regressor vector with the displacement $y(k) \in \mathbb{R}$ and the input $u(k) \in \mathbb{R}$.
In this study, the regressor vector is set as $\phi(k) = [y(k),\ y(k-1),\ u(k), u(k-1)]^T$, and the true parameter vector $\theta$ is given as 
\begin{enumerate}[(a)]
  \item $\theta=[1.6405,\ -0.8187,\ 0.4606,\ 0.430]^T$
  \item $\theta=[0.3116,\ -0.9980,\ 0.4218,\ 0.4215]^T$
  \item $\theta=[1.1267,\ -0.1353,\ 0.2834,\ 0.1482]^T$
\end{enumerate}
For all simulations, the initial values of the estimated parameters were set to $\hat{\theta}=[0,\ 0,\ 0,\ 0]^T$, and the initial values of the covariance matrix were set to $P(0)=1000I$.
Furthermore, the identification input was set to $u(k)=\sin(0.1k)$, which satisfies only the FE not PE condition.
Furthermore, for time-invariant systems, only the true value parameter (a) was used, whereas for time-varying systems, only the true value parameter (a) was used for $0\leq k <200$ steps, (b) $200\leq k <500$ steps, and (c) $500\leq k <1500$ steps.

\noindent\textbf{Simulation 1)} We compared EF-RLS, DF-RLS, CL, DF-CL, and TLF-RLS.
The EF or TLF-RLS in the outer forgetting factor was set to $\lambda = 0.99$, the DF or TLF-RLS in the inner forgetting factor was set to $\mu = 0.5$, and the forgetting factor in DF-CL was also set to $\mu = 0.5$.
The CL used an algorithm that maximizes the inverse of the condition number \cite{DCL}, and the number of stored data points was set to $4$, which was equal to the system order.
Furthermore, in the simulations of time-varying systems, the inner forgetting factor in TLF-RLS was set to $\mu=0.99$ to emphasize the robustness against changes in system characteristics, and DF-CL was set to be the same.
In TLF-RLS and DF-CL, the interpretation of the DF differed from that of the DF-RLS in that values closer to 1 corresponded to stronger forgetting. Therefore, the forgetting factor in DF-RLS was set to $\mu = 0.01$ to satisfy the same conditions.

\noindent\textbf{Simulation 2)} We evaluate the convergence rate of TLF-RLS using the outer forgetting factor. We fixed the DF in the inner forgetting factor at $\mu = 0.5$ and evaluated the identification performance by setting the EF in the outer forgetting factor to five values: $\lambda=0.99, 0.9, 0.8, 0.5, 0.01$.

\noindent\textbf{Simulation 3)} 
We compared TLF-RLS with EF and ReEF for the time-varying system.
The inner forgetting factor in TLF-RLS was fixed at $0.99$, whereas in TLF-RLS with the EF algorithm, the outer forgetting factor was set to three values: $0.99$, $0.5$, and $0.01$.
Furthermore, in TLF-RLS with the ReEF algorithm, the minimum forgetting factor $\lambda_{\min}(k)$ in the outer forgetting factors was set to $0.01$.
The lower bound of the maximum forgetting factor was set to $\bar{\lambda}_{\max} = 0.99$, and two values of the design parameter $\rho$ were examined: $0.01$ and $0.99$. Accordingly, $\lambda_{\max}(k)$ at each time step was computed using Algorithm \ref{alg:ReEF}.
Furthermore, the remaining forgetting factors were selected at intervals of $0.01$, starting from $\lambda_{\min}(k)$ to satisfy Assumption \ref{ass:ReEF_stability2}.

\subsection{Simulation results and discussion}
\textbf{Simulation 1)}
Figures \ref{fig:Iderror_LTI_case1} and \ref{fig:para_LTI_case1} show the identification and parameter errors for the time-invariant system for Case 1, respectively. 
Figure \ref{fig:Iderror_LTI_case1} shows that the identification errors for all methods approach zero, which implies that the adaptive identification algorithms achieve accurate identification.
Although the EF-RLS and DF-RLS methods reduce the identification error, as shown in Fig. \ref{fig:para_LTI_case1}, the parameter error does not decrease from a certain level because EF-RLS has a relatively large forgetting factor of $0.99$, which limits its performance. In addition, the boundedness of the covariance matrix cannot be guaranteed because the identification input does not satisfy the PE condition. Thus, setting a small forgetting factor can lead to instability.
Moreover, DF-RLS guarantees that the system is only uniformly stable in the sense of Lyapunov with respect to the parameter error. Although it remains stable under non-PE condition, no further performance improvement can be expected.
Although EF-RLS achieves a smaller parameter error compared to that of the CL and DF-CL during the transient response, the errors for the CL and DF-CL continue to decrease gradually and improve after approximately $1500$ steps compared to that for the  EF-RLS.
These results indicate that the parameter converge to the true value under non-PE condition. In addition, the DF-CL achieves a faster convergence rate than the CL method. This demonstrates the effectiveness of the DF algorithm and confirms that applying a forgetting factor to the augmented regressor matrix is more effective than applying the DF-RLS. 
Meanwhile, the TLF-RLS method achieves very fast convergence to the true parameter values because of its ability to set the outer layer forgetting factor to $0.01$. This is significant because conventional methods update the parameters using a single layer, whereas the proposed method employs a two-layer structure that ensures the PE condition through an augmented regressor matrix.

Figures \ref{fig:Iderror_LTV_case1} and \ref{fig:para_LTV_case1} 
show the identification and parameter errors for a time-varying system with parameter jumps under Case 1, respectively.
Figure \ref{fig:Iderror_LTV_case1} shows that the identification errors are sufficiently small for all methods, similar to that in the time-invariant case.
In the EF-RLS, a significant estimation windup occurs at approximately 200 and 500 steps, where parameter jumps occur. The norm of the parameter error remains large after 500 steps because of the very slow convergence rate. This behavior is attributed to the insufficient forgetting of past data.
Furthermore, the CL does not have a forgetting factor, and therefore, it cannot reduce the norm of the parameter error after changes in the system characteristics and exhibits oscillatory behavior.
Both DF-RLS and DF-CL suppress the estimation windup because of the advantages of the DF algorithm that preserves the positive definiteness of the information matrix.
However, as observed in the time-invariant case, DF-RLS has an almost constant norm of parameter error, whereas DF-CL can reduce this norm gradually because of a very slow convergence rate. Thus, neither method can achieve satisfactory performance.
However, DF-CL achieves performance similar to that of TLF-RLS after approximately 500 steps.
TLF-RLS demonstrates robustness against changes in system characteristics, as indicated by its fast parameter convergence and small parameter error norm, even after 200 and 500 steps.
However, a significant estimation windup occurs at the time steps where the system characteristics change, and its magnitude is comparable to that of EF-RLS.
 Thus, although the proposed method adopts the DF algorithm, the parameter convergence rate is contributed by the EF algorithm.
Further, Fig. \ref{fig:Omega_LTV_case3} shows the condition number with both the minimum and maximum eigenvalues of the augmented regressor matrix, respectively. In the steady-state response, the minimum eigenvalue is sufficiently large and the maximum eigenvalue is small, which results in a small condition number. These observations indicate that the EF algorithm is dominant.
Therefore, it is necessary to improve the outer forgetting factor. This issue is discussed in detail in Case 3 for TLF-RLS with the ReEF algorithm.

\noindent\textbf{Simulation 2)}
We evaluated the effect of the outer forgetting factor in TLF-RLS on the convergence rate.
Figures \ref{fig:Iderror_LTI_case2} and \ref{fig:para_LTI_case2} show the identification and parameter errors for Case 2, respectively. These figures indicate that the convergence rates of both the identification and parameter errors improve significantly as the outer forgetting factor approaches zero. According to Theorem \ref{thm:parameter}, this can be attributed to the outer forgetting factor determining the decreasing rate of the Lyapunov function.
Under non-PE condition, RLS cannot adopt the EF algorithm because setting a small EF factor easily leads to instability within a short period. In contrast, the proposed TLF-RLS remains globally exponentially stable with an arbitrary forgetting factor, which highlights its practical advantages.

\noindent\textbf{Simulation 3)}
We evaluated the suppression of the estimation windup achieved by TLF-RLS with the ReEF algorithm. Figures \ref{fig:iderror_LTV_case3} and \ref{fig:para_LTV_case3}
show the identification and parameter errors for Case 3, respectively.
For all TLF-RLS with EF and ReEF, the identification and parameter errors become sufficiently small in the steady-state response, achieving robustness against changes in system characteristics.
However, in TLF-RLS with the EF algorithm, an estimation windup occurs when the outer forgetting factor is set to small values such as $\lambda = 0.5$ or $\lambda = 0.01$, which confirms that the robustness during the transient performance is insufficient.
In the case of $\lambda = 0.99$ as the outer forgetting factor, the estimation windup is suppressed, particularly around 500 steps. However, the parameter convergence rate clearly decreases as indicated by Theorem \ref{thm:parameter}.
Further performance improvements are limited because TLF-RLS with the EF adopts a constant forgetting factor.
In contrast, TLF-RLS with ReEF demonstrates effective results through matrix-based forgetting.
In TLF-RLS with ReEF, the minimum forgetting factor is $\lambda_{\min}(k) = 0.01$, and the maximum forgetting factor $\lambda_{\max}(k)$ is updated, as shown in Fig. \ref{fig:FF_LTV_case3} based on two design parameters.
From this figure, we can observe that $\lambda_{\max}(k) \approx 0.7$ for $\rho = 0.01$ and $\lambda_{\max}(k) = 0.99$ for $\rho = 0.99$.
In both cases, the estimation windup is suppressed effectively, and its magnitude is reduced to approximately one-tenth of that in TLF-RLS with the EF algorithm.
Furthermore, the parameter convergence rate remains high, in particular, the transient convergence performance is not worse than that of TLF-RLS with the EF algorithm when the outer forgetting factor is set to $\lambda = 0.5$ or $0.01$.
These results can be interpreted in terms of the minimum and maximum eigenvalues and the condition numbers of the covariance matrices shown in Figs. \ref{fig:Pmin_LTV_case3}, \ref{fig:Pmax_LTV_case3}, and \ref{fig:cond_LTV_case3}.
In TLF-RLS with the EF algorithm, focusing on the minimum eigenvalue of the covariance matrix reveals that this value becomes very small when $\lambda = 0.99$. This confirms that the adaptive capability is significantly reduced.
Moreover, the minimum eigenvalue of the covariance matrix increases with a decrease in $\lambda$; thus, a small forgetting factor is necessary to improve the robustness against changes in the system characteristics.
The maximum eigenvalue of the covariance matrix increases with a decrease in the forgetting factor. A larger maximum eigenvalue of the covariance matrix leads to more sensitive parameter updates, which makes the estimation windup more likely. Thus, a smaller value is preferable. Therefore, in TLF-RLS with the EF algorithm, increasing the minimum eigenvalue and decreasing the maximum eigenvalue involves a trade-off.
In contrast, for TLF-RLS with the ReEF algorithm, the minimum eigenvalue is large and comparable to that of TLF-RLS with the EF at $\lambda = 0.01$ regardless of the value of $\rho$, whereas it remains small and comparable to that of TLF-RLS with the EF algorithm at $\lambda = 0.99$.
Although slight differences are observed based on the value of $\rho$, both the minimum and maximum eigenvalues are slightly larger for $\rho = 0.01$ than those for $\rho = 0.99$; however, this difference is negligible compared to that in TLF-RLS with the EF algorithm.
These results are evident from Fig. \ref{fig:cond_LTV_case3}; TLF-RLS with ReEF achieves a sufficiently small condition number, and it is reasonable to conclude that it exhibits high robustness against characteristics changes during both steady-state and transient response.
Further, its effectiveness is demonstrated by guaranteeing the uniform ultimate boundedness of the condition number using the ReEF algorithm, as shown in Theorem \ref{thm:cond}. Furthermore, compared to the DF-CL case shown in Fig. \ref{fig:para_LTV_case1}, TLF-RLS with the ReEF algorithm achieves a comparable level of estimation windup suppression. Moreover, this algorithm achieves a significantly faster parameter convergence rate than that of DF-CL.
Therefore, TLF-RLS with the ReEF algorithm simultaneously achieves fast parameter convergence and suppresses the estimation windup during the transient response, addressing the trade-off between these objectives.

\begin{figure}[ht]
\centering
\includegraphics[width=\columnwidth]{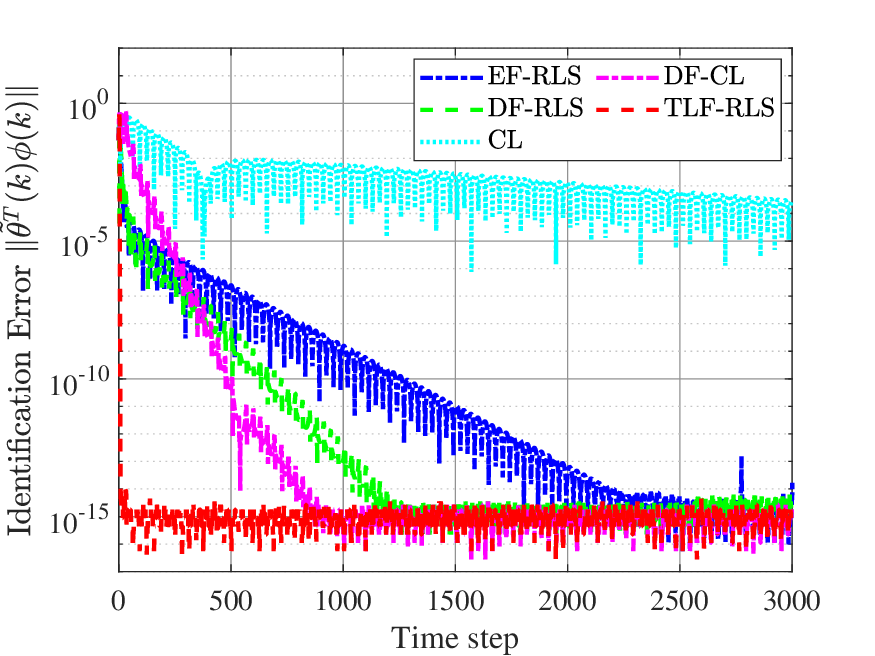}
\caption{Comparison of identification error for the LTI system under Case 1 (EF-RLS: $\lambda=0.99$, DF-RLS: $\mu=0.5$, CL: no forgetting, DF-CL: $\mu=0.5$, and TLF-RLS: $\lambda=0.01, \mu=0.5$)}
\label{fig:Iderror_LTI_case1}
\end{figure}

\begin{figure}[ht]
\centering
\includegraphics[width=\columnwidth]{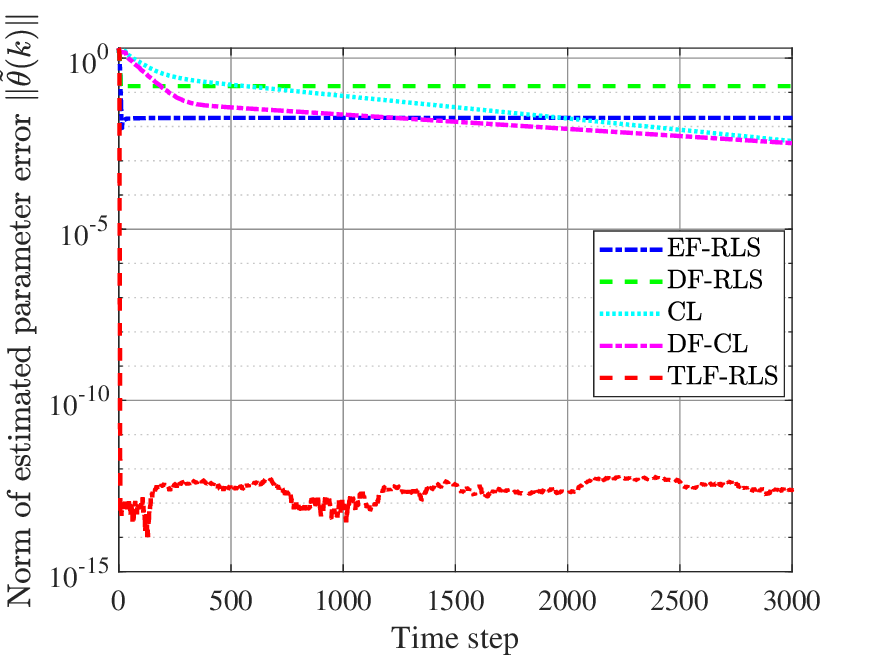}
\caption{Comparison of the parameter error for the LTI system under Case 1 (EF-RLS: $\lambda=0.99$, DF-RLS: $\mu=0.5$, CL: no forgetting, DF-CL: $\mu=0.5$, and TLF-RLS: $\lambda=0.01, \mu=0.5$)}
\label{fig:para_LTI_case1}
\end{figure}

\begin{figure}[ht]
\centering
\includegraphics[width=\columnwidth]{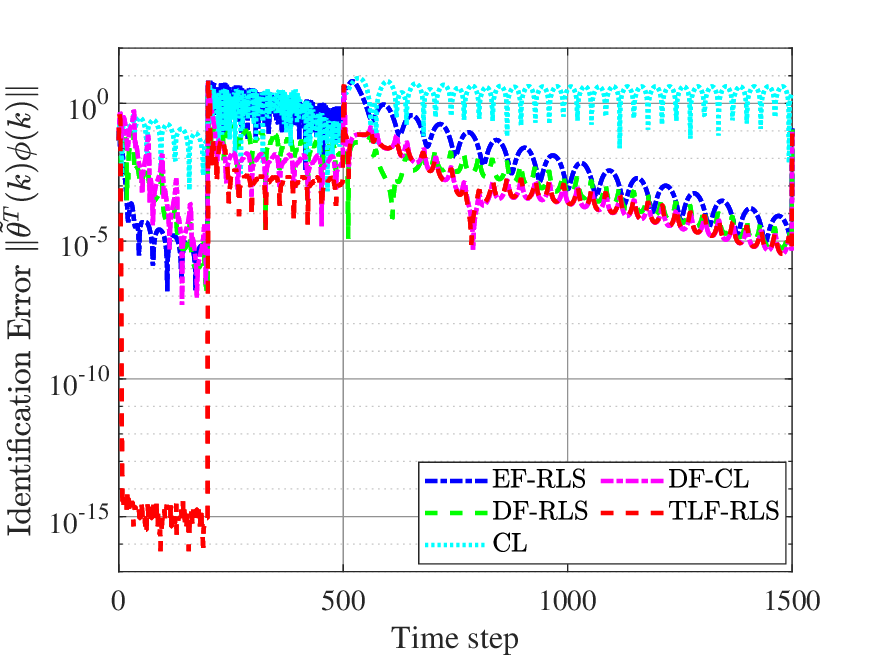}
\caption{Comparison of the identification error for the system with a parameter jump under Case 1 (EF-RLS: $\lambda=0.99$, DF-RLS: $\mu=0.01$, CL: no forgetting, DF-CL: $\mu=0.99$, and TLF-RLS: $\lambda=0.01, \mu=0.99$). The true value parameter (a) was used for $0\leq k <200$ steps, (b) $200\leq k <500$ steps, and (c) $500\leq k <1500$ steps)}
\label{fig:Iderror_LTV_case1}
\end{figure}

\begin{figure}[ht]
\centering
\includegraphics[width=\columnwidth]{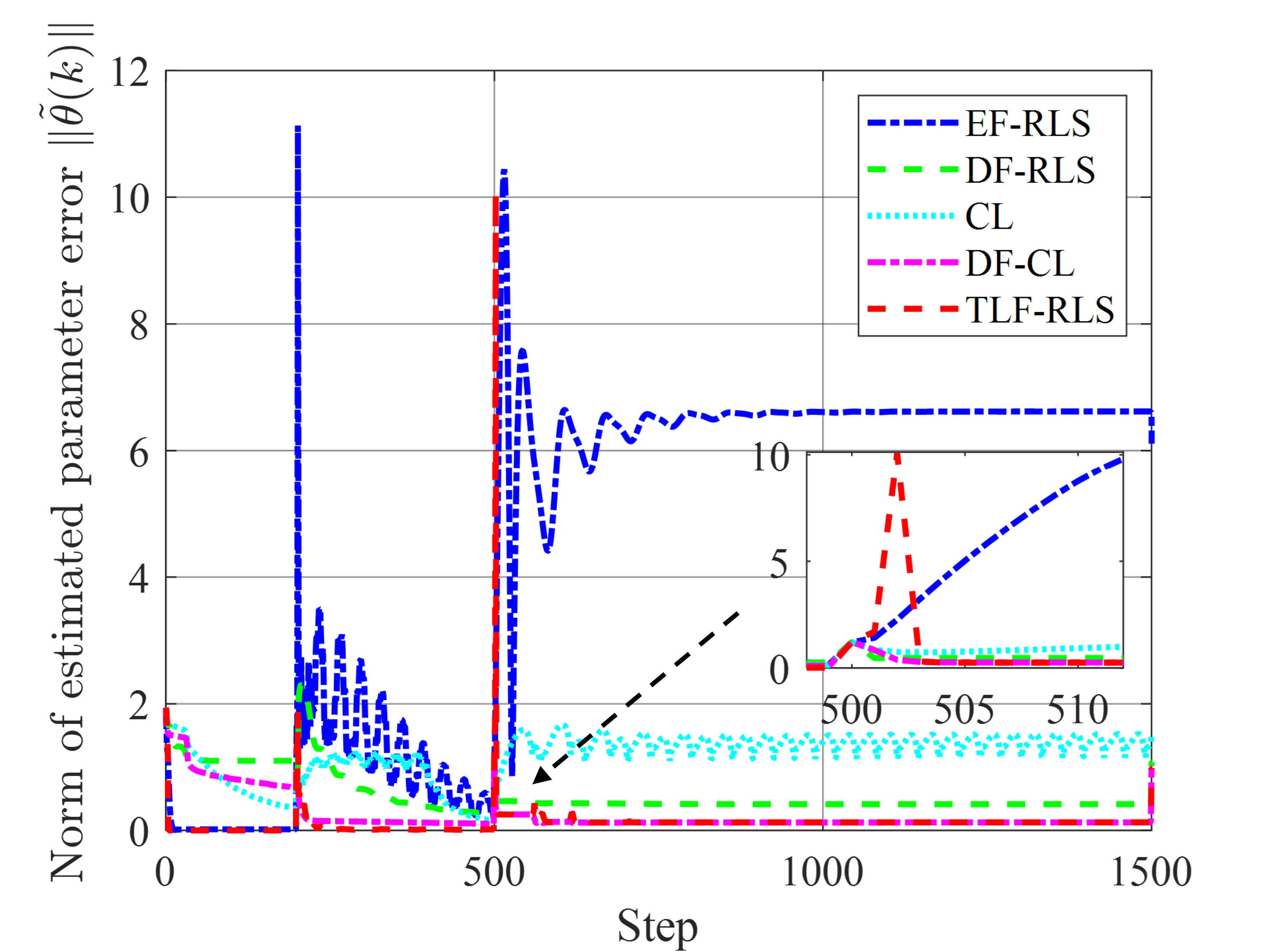}
\caption{Comparison of the parameter error for the system with parameter jump under Case 1 (EF-RLS: $\lambda=0.99$, DF-RLS: $\mu=0.01$, CL: no forgetting, DF-CL: $\mu=0.99$, and TLF-RLS: $\lambda=0.01, \mu=0.99$). The true value parameter (a) was used for $0\leq k <200$ steps, (b) $200\leq k <500$ steps, and (c) $500\leq k <1500$ steps)}
\label{fig:para_LTV_case1}
\end{figure}

\begin{figure}[ht]
\centering
\includegraphics[width=\columnwidth]{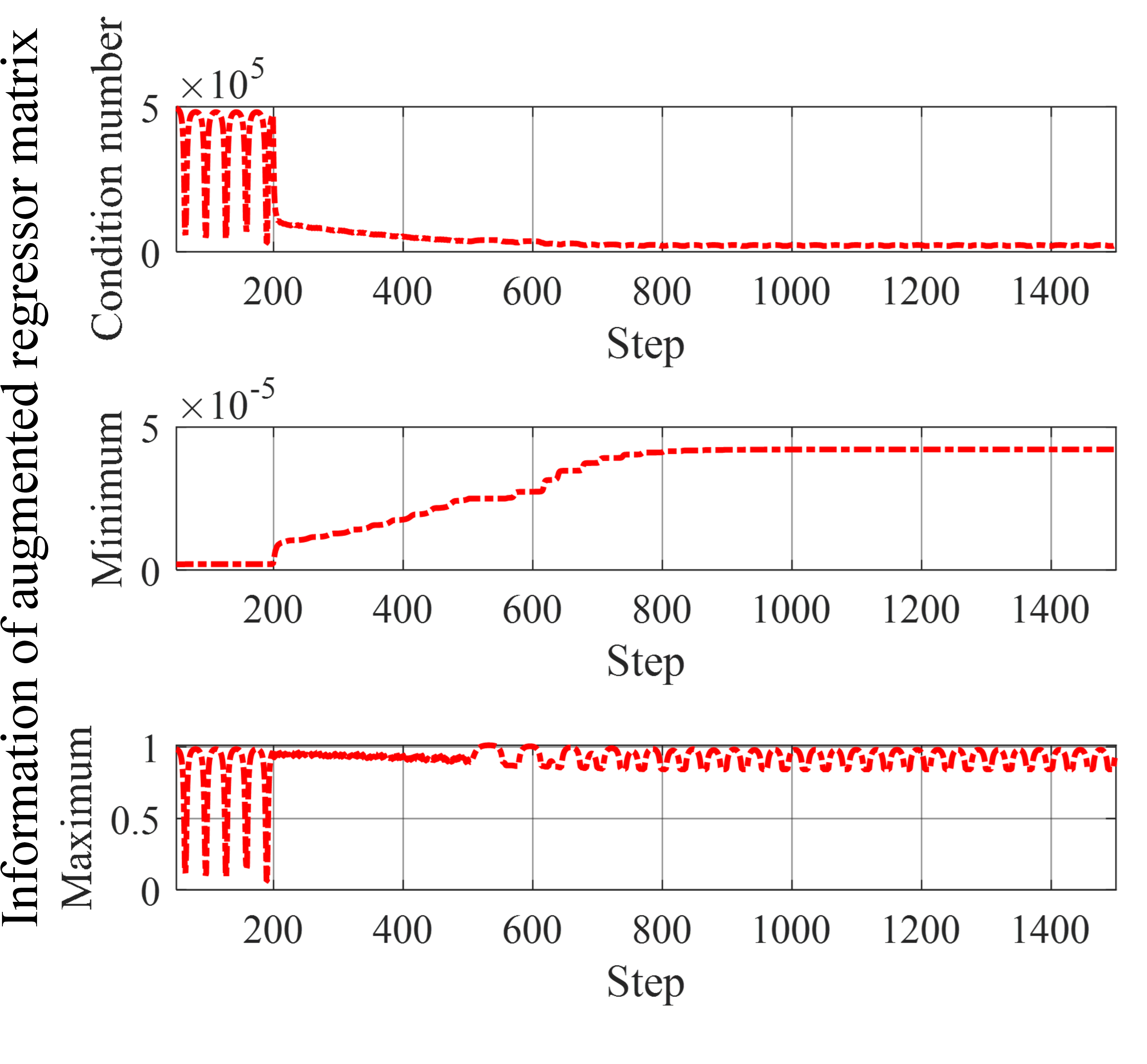}
\caption{Information of the augmented regressor matrix (top: condition number, middle: minimum eigenvalue, and bottom: maximum eigenvalue)}
\label{fig:Omega_LTV_case3}
\end{figure}

\begin{figure}[ht]
\centering
\includegraphics[width=\columnwidth]{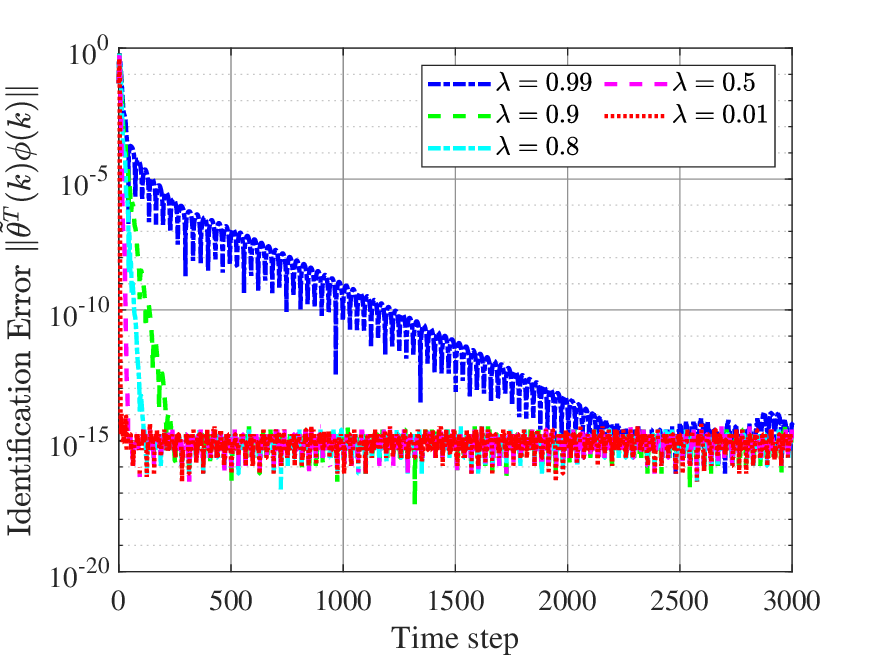}
\caption{Comparison of identification error for TLF-RLS with the EF algorithm ($\mu=0.99$, $\lambda=0.99, 0.9, 0.8, 0.5$ and $0.01$)}
\label{fig:Iderror_LTI_case2}
\end{figure}

\begin{figure}[ht]
\centering
\includegraphics[width=\columnwidth]{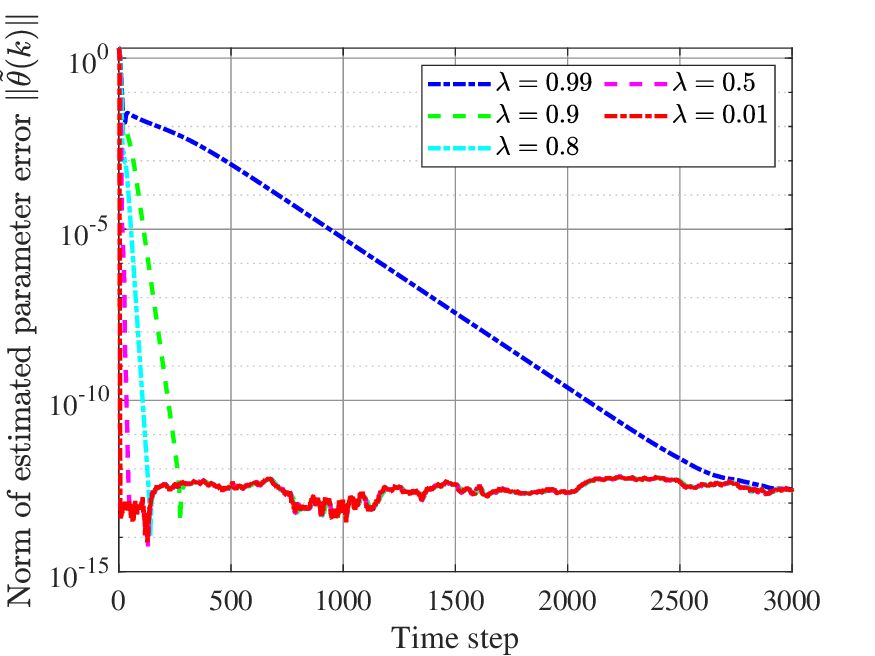}
\caption{Comparison of parameter error for TLF-RLS with the EF algorithm ($\mu=0.99$, $\lambda=0.99, 0.9, 0.8, 0.5$ and $0.01$)}
\label{fig:para_LTI_case2}
\end{figure}

\begin{figure}[ht]
\centering
\includegraphics[width=\columnwidth]{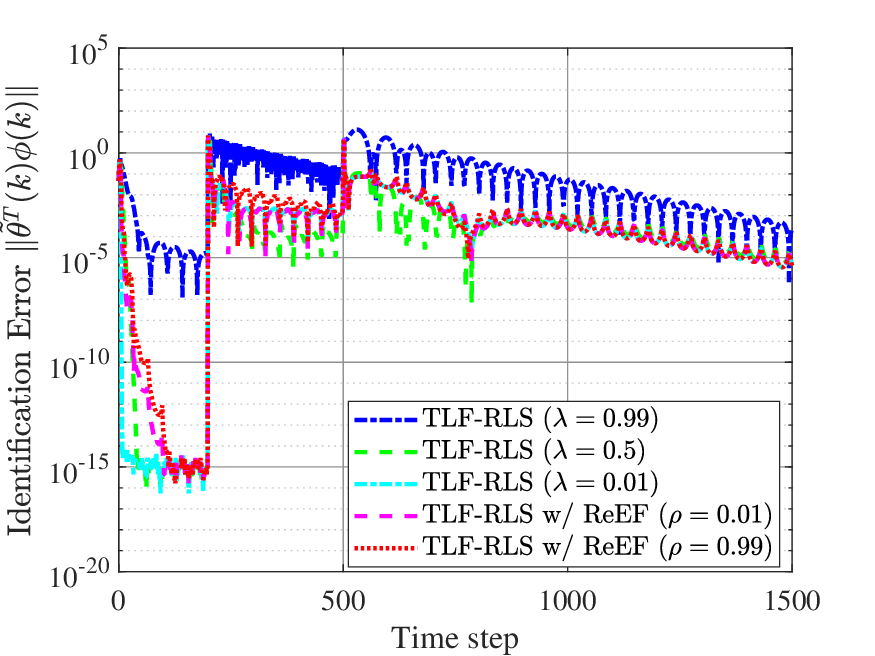}
\caption{Comparison of the identification error for TLF-RLS with EF and ReEF for the system with parameter jump under Case 3 ($\mu=0.99$; the true value parameter (a) was used for $0\leq k <200$ steps, (b) $200\leq k <500$ steps, and (c) $500\leq k <1500$ steps)}
\label{fig:iderror_LTV_case3}
\end{figure}

\begin{figure}[ht]
\centering
\includegraphics[width=\columnwidth]{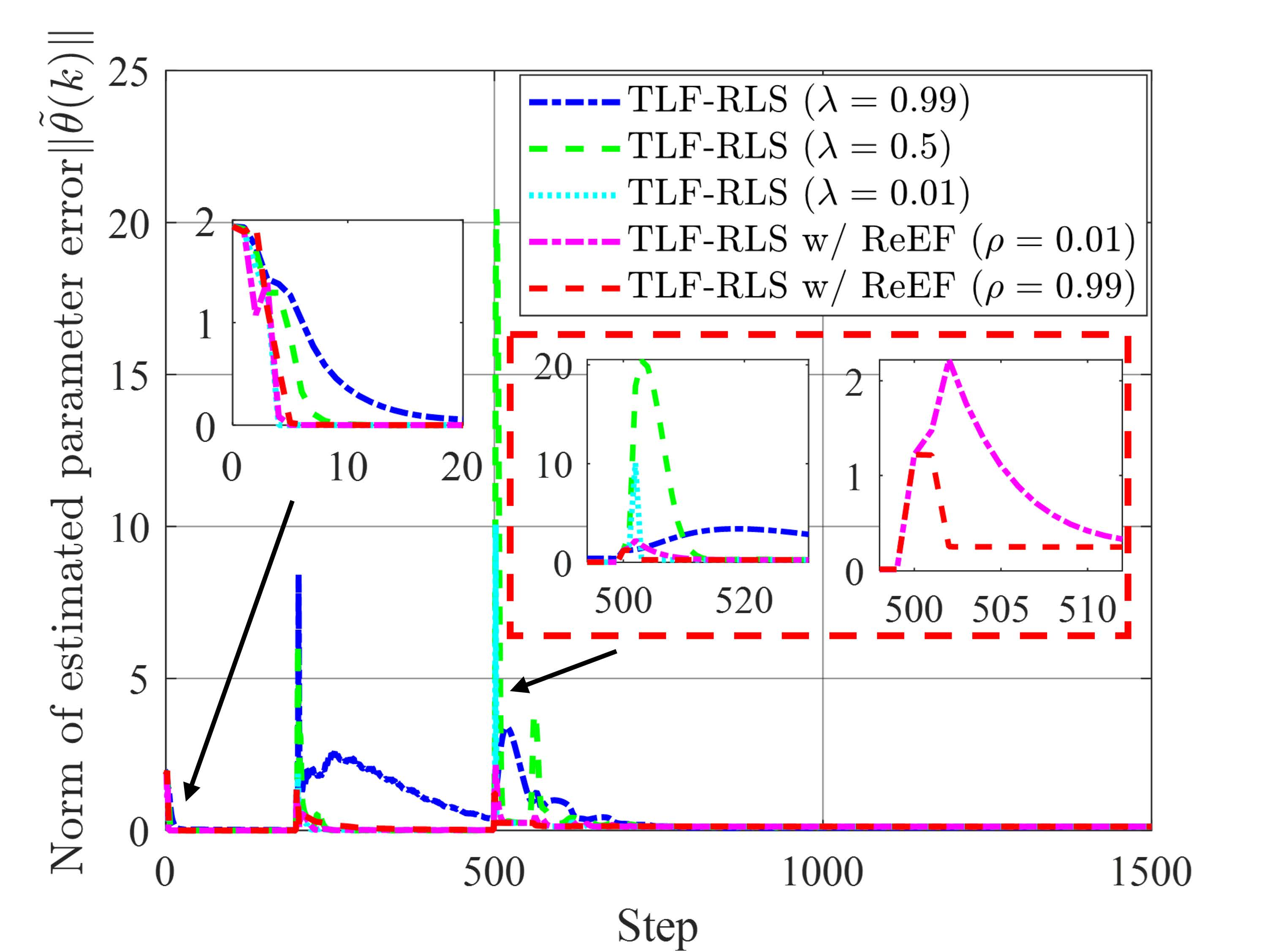}
\caption{Comparison of parameter error for TLF-RLS with EF and ReEF for the system with parameter jump under Case 3 ($\mu=0.99$; the true value parameter (a) was used for $0\leq k <200$ steps, (b) $200\leq k <500$ steps, and (c) $500\leq k <1500$ steps)}
\label{fig:para_LTV_case3}
\end{figure}
\begin{figure}[ht]
\centering
\includegraphics[width=\columnwidth]{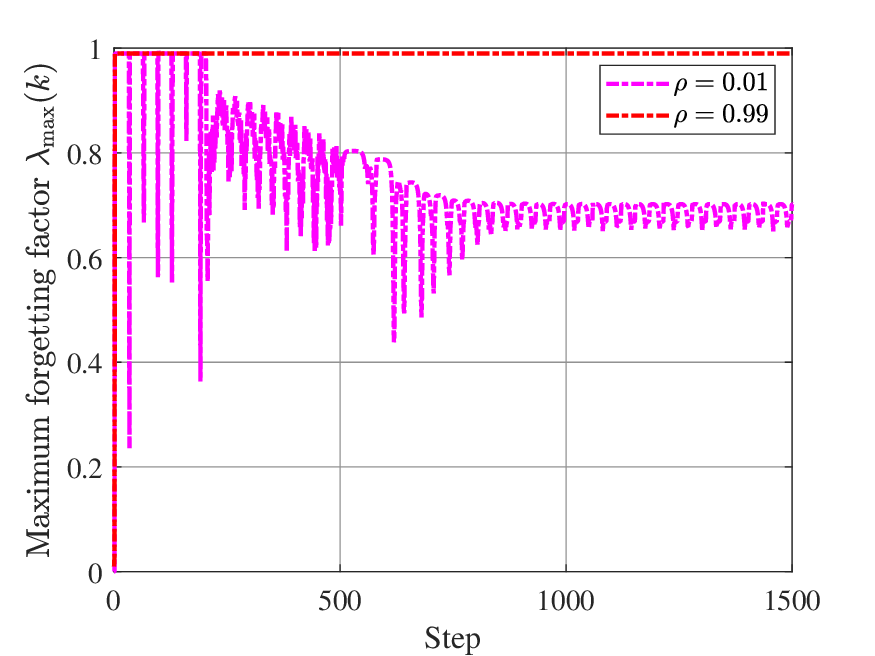}
\caption{Comparison of maximum forgetting factor $\lambda_{\max}(k)$ for $\rho = 0.01$ and $\rho = 0.99$}
\label{fig:FF_LTV_case3}
\end{figure}
\begin{figure}[ht]
\centering
\includegraphics[width=\columnwidth]{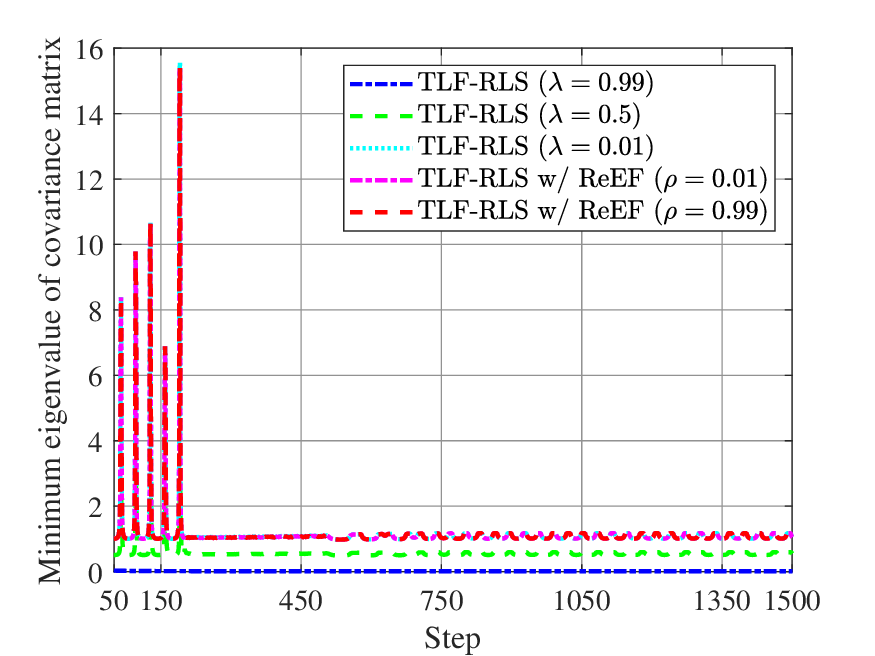}
\caption{Comparison of the minimum eigenvalue of the covariance matrix in the steady-state response for TLF-RLS with EF and ReEF for the system with parameter jump under Case 3 ($\mu=0.99$; the true value parameter was used for (a) $0\leq k <200$ steps, (b) $200\leq k <500$ steps, and (c) $500\leq k <1500$ steps)}
\label{fig:Pmin_LTV_case3}
\end{figure}
\begin{figure}[ht]
\centering
\includegraphics[width=\columnwidth]{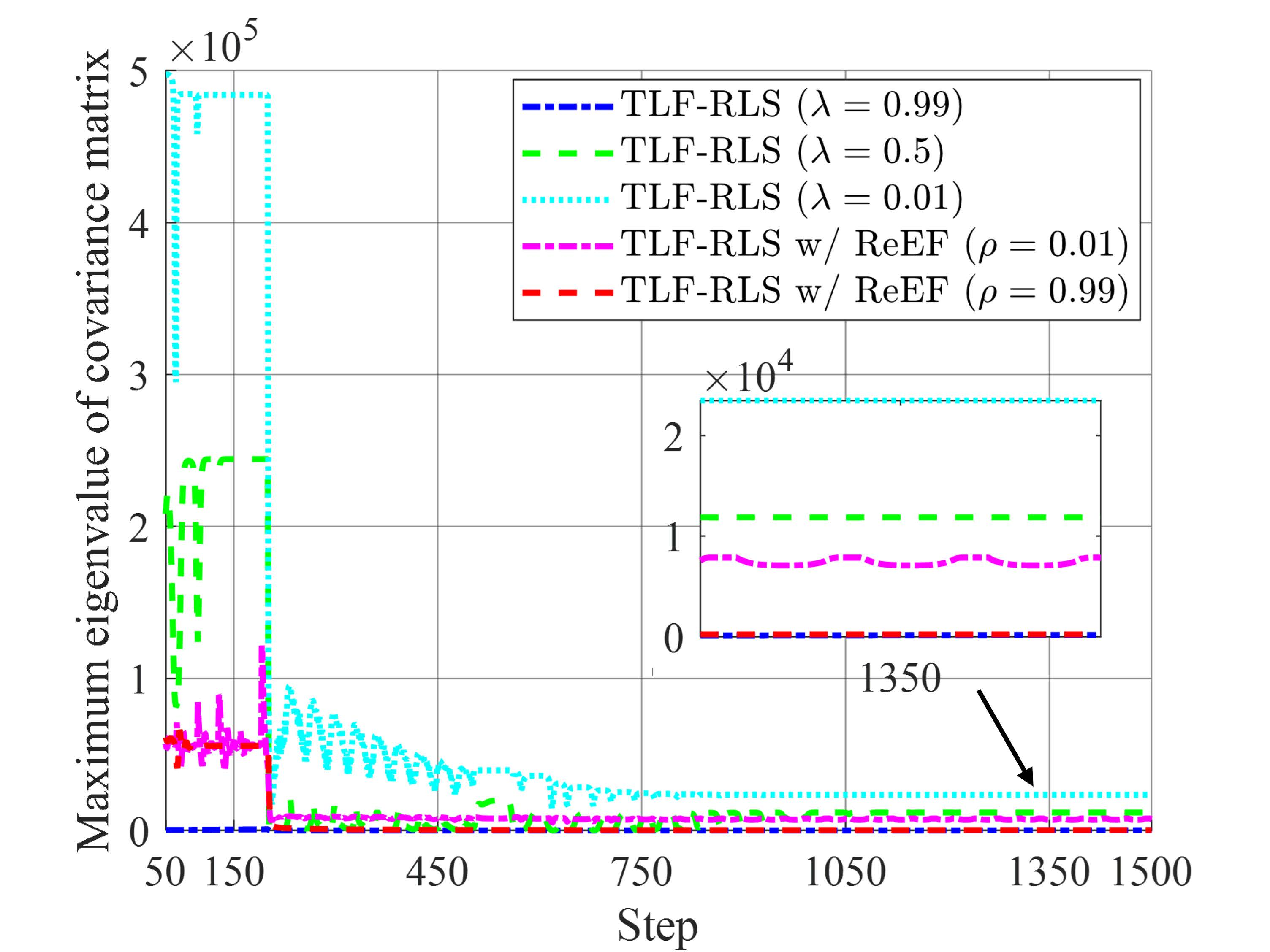}
\caption{Comparison of the maximum eigenvalue of the covariance matrix in the steady-state response for TLF-RLS with EF and ReEF for the system with parameter jump under Case 3 ($\mu=0.99$; the true value parameter was used for (a) $0\leq k <200$ steps, (b) $200\leq k <500$ steps, and (c) $500\leq k <1500$ steps)}
\label{fig:Pmax_LTV_case3}
\end{figure}
\clearpage
\begin{figure}[ht]
\centering
\includegraphics[width=\columnwidth]{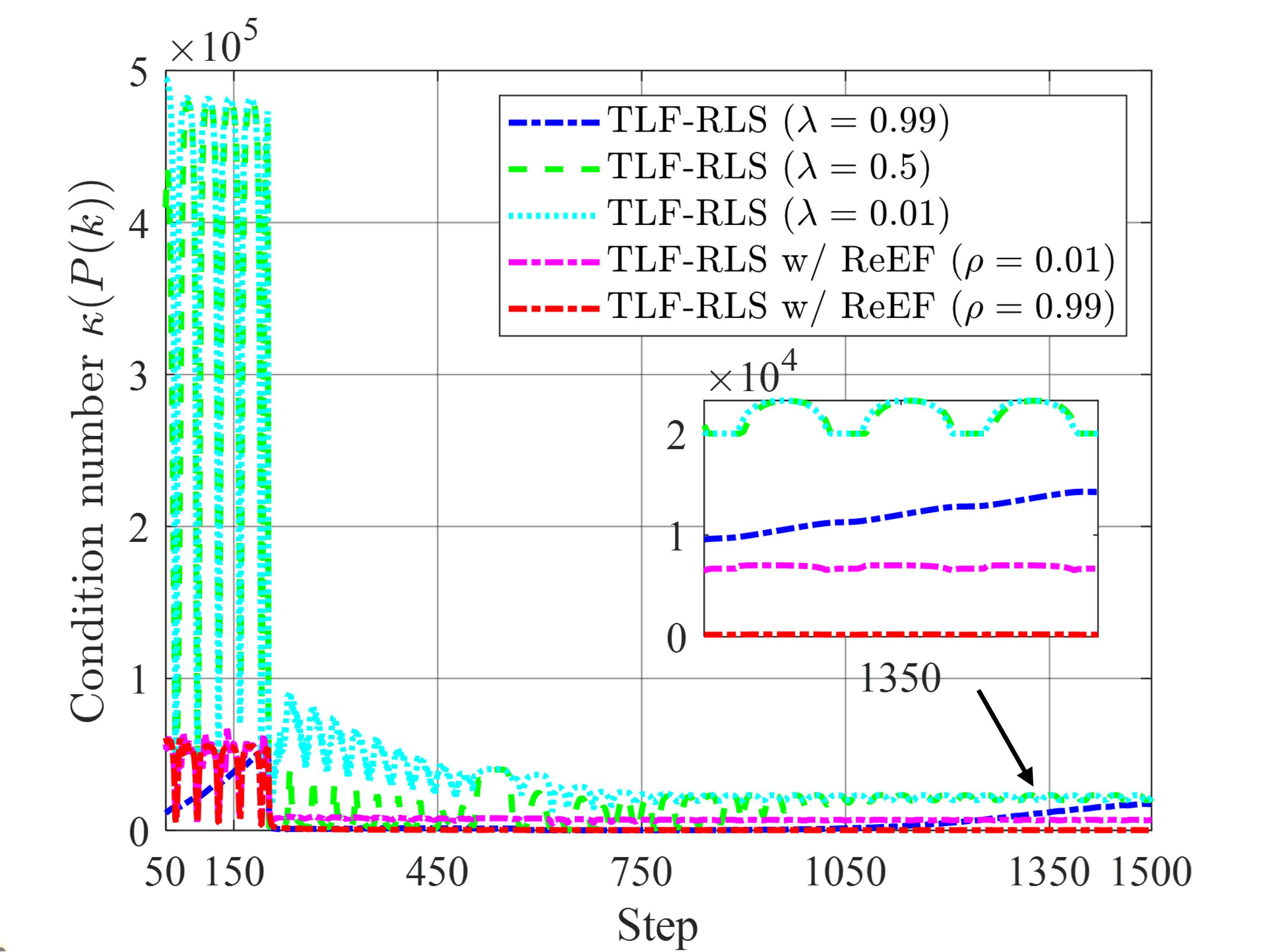}
\caption{Comparison of the condition number of the covariance matrix in the steady-state response for TLF-RLS with EF and ReEF for the system with parameter jump under Case 3 ($\mu=0.99$; the true value parameter was used for (a) $0\leq k <200$ steps, (b) $200\leq k <500$ steps, and (c) $500\leq k <1500$ steps)}
\label{fig:cond_LTV_case3}
\end{figure}

\section{Conclusion}
This paper proposed three novel adaptive identification algorithms under FE conditions based on an augmented regressor matrix that utilizes DF. A two-layer forgetting structure and proposed reconfiguration-based EF (ReEF) factor were introduced to address the trade-off between the parameter convergence rate and estimation windup suppression under non-PE condition.
Conventional methods such as EF-RLS and DF-RLS exhibit inherent limitations under non-PE condition; EF-RLS suffers from the instability or degraded performance, whereas DF-RLS ensures that the system is only stable in the sense of Lyapunov.
In contrast, adaptive identification using DF-based augmented regressor matrix such as DF-CL and TLF-RLS achieves the global exponential stability of parameter error for time-invariant systems and exhibits high robustness against changes in system characteristics in the steady-state response under non-PE condition.
The proposed two-layer forgetting structure enables an arbitrary selection of forgetting factors under non-PE condition, achieving a flexible tuning of the convergence rate. The ReEF algorithm improves the transient response after changes in the system characteristics while maintaining an estimation windup suppression comparable to that of conventional methods such as DF-RLS. 
Moreover, it guarantees the uniform ultimate boundedness of the condition number under mild assumptions. Numerical simulations demonstrated that TLF-RLS with the ReEF algorithm successfully achieved both fast parameter convergence and effective suppression of the estimation windup in both transient and steady-state responses under non-PE condition.
These results confirm that the proposed method effectively resolves the trade-off between convergence rate and robustness, providing a practical framework for adaptive identification in time-varying systems.
\bibliography{mybibfile}

\appendix
\section{Weyl's inequalities}
\begin{lemma}\label{lem:weyl}
Let $A, B \in \mathbb{C}^{n \times n}$ be Hermitian. Consequently, for all $i = 1, \ldots, n$,
\begin{equation}
\sigma_i(A) + \sigma_{\min}(B) \le \sigma_i(A + B) \le \sigma_i(A) + \sigma_{\max}(B).
\end{equation}
\end{lemma}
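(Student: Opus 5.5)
The statement to prove is the last one above, i.e. Lemma~\ref{lem:weyl} (Weyl's inequalities). The plan is to use the Courant--Fischer min--max characterization of the eigenvalues of a Hermitian matrix. Order the eigenvalues as $\sigma_1(\cdot)\geq\cdots\geq\sigma_n(\cdot)$. For a Hermitian $H\in\mathbb{C}^{n\times n}$ we have
\begin{align}
\sigma_i(H)=\max_{\dim\mathcal{S}=i}\ \min_{x\in\mathcal{S},\,\|x\|=1} x^{*}Hx .
\end{align}
This is proved in the standard way: take $\mathcal{S}$ spanned by the first $i$ eigenvectors to attain the value. Conversely, any $i$-dimensional $\mathcal{S}$ meets the span of the last $n-i+1$ eigenvectors nontrivially by a dimension count, so no choice of $\mathcal{S}$ does better. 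I will quote this characterization as known rather than reprove it in detail.

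The key step is the pointwise Rayleigh-quotient bound. Since $B$ is Hermitian, every unit vector $x$ satisfies
\begin{align}
\sigma_{\min}(B)\leq x^{*}Bx\leq\sigma_{\max}(B).
\end{align}
This follows from diagonalizing $B$ unitarily and writing $x^{*}Bx$ as a convex combination of its eigenvalues.

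Adding $x^{*}Ax$ to both sides gives
\begin{align}
x^{*}Ax+\sigma_{\min}(B)\leq x^{*}(A+B)x\leq x^{*}Ax+\sigma_{\max}(B)
\end{align}
for every unit $x$. Both bounding terms are constant shifts, and taking $\min_{x\in\mathcal{S}}$ followed by $\max_{\dim\mathcal{S}=i}$ is monotone and commutes with adding a constant. Applying these operations to all three members yields $\sigma_i(A)+\sigma_{\min}(B)\leq\sigma_i(A+B)\leq\sigma_i(A)+\sigma_{\max}(B)$ for every $i$, which is the claim.

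The only real content lies in the min--max formula, which I regard as the main obstacle if it must be proved from scratch. Its essential ingredient is the dimension-intersection argument: two subspaces of dimensions $i$ and $n-i+1$ in $\mathbb{C}^n$ intersect nontrivially. Everything after that is monotonicity. As a sanity check, the special cases $i=1$ and $i=n$ reduce to $\sigma_{\max}(A+B)\leq\sigma_{\max}(A)+\sigma_{\max}(B)$ and $\sigma_{\min}(A+B)\geq\sigma_{\min}(A)+\sigma_{\min}(B)$. These are exactly the forms used in the proof of Theorem~\ref{thm:cond}.
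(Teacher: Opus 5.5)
Your argument is correct: the pointwise Rayleigh-quotient bound $\sigma_{\min}(B)\le x^{*}Bx\le\sigma_{\max}(B)$, combined with the Courant--Fischer characterization (monotone in the integrand and invariant under constant shifts), gives the two-sided inequality for every $i$, and it holds for any consistent ordering of the eigenvalues. The paper gives no argument of its own and only cites Corollary 4.3.15 of Horn and Johnson, which rests on the same min--max/subspace-intersection reasoning, so you have essentially written out the standard proof behind that citation.
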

\begin{proof}
See Corollary 4.3.15 in \cite{matrix}. 
\end{proof}
\section{Proof of Proposition 2}
\begin{proof}
This proof can be established based on the analyses in \cite{FF_survey,EF-RLS,TLF-RLS}. The update law of the information matrix for TLF-RLS with the EF algorithm is expressed as
\begin{align}\label{ma:TRLS_update_R}
R(k) &= \lambda R(k-1) + \Phi^2(k-1)
\end{align}
For any $\sigma \geq 0$, and considering the sum over the interval $[\sigma, \sigma+\delta]$ on both sides, we can rewrite this as follows by Theorem \ref{thm:PE}.
\begin{align}\label{ma:TRLS_R_temp}
\sum_{k=\sigma}^{\sigma+\delta} R(k) 
&= \sum_{k=\sigma}^{\sigma+\delta} \left( \lambda R(k-1) + \Phi^2(k-1) \right) \notag \\
&\geq \sum_{k=\sigma}^{\sigma+\delta} \Phi^2(k-1) \geq \alpha_{\Phi} I.
\end{align}
From Proposition \ref{prop_DF} and (\ref{ma:TRLS_update_R}), it follows that $R(k) > \lambda R(k-1)$ holds for any $k \geq k_e$.
Therefore, (\ref{ma:TRLS_R_temp}) can be rewritten as 
\begin{align}
\sum_{i=0}^{\delta} \frac{1}{\lambda^i} R(\sigma+\delta) 
&> \sum_{k=\sigma}^{\sigma+\delta} R(k) \geq \alpha_{\Phi} I \notag \\
\Longleftrightarrow \quad
R(\sigma+\delta) &> \frac{\frac{1}{\lambda}-1}{\frac{1}{\lambda^{\delta+1}}-1}\alpha_{\Phi} I = \alpha_R I > 0
\end{align}
where $\alpha_R\triangleq{(\frac{1}{\lambda -1})}/{(\frac{1}{\lambda^{\delta+1}}-1)}$.
Next, we evaluate the upper bound of information matrix $R(k)$.
For any $k \geq k_e$, because $R(k+1) > \lambda R(k)$ holds, the following inequality can be obtained.
\begin{align}
\sum_{i=0}^{\delta} \lambda^i R(\sigma+1) < \sum_{i=0}^{\delta} R(\sigma+i+1),
\quad \forall \sigma \geq k_e.
\end{align}
Moreover, for any $k \geq k_e$, the following inequality also holds.
\begin{align}
R(\sigma+1)&<\frac{1-\lambda}{1-\lambda^{\delta+1}}\sum_{k=\sigma}^{\sigma+\delta}R(k+1)\nonumber\\
&= \frac{1-\lambda}{1-\lambda^{\delta+1}}
\left(\lambda\sum_{k=\sigma}^{\delta+\sigma} R(k)
  + \sum_{k=\sigma}^{\sigma+\delta} \Phi^{2}(k)
\right)\nonumber\\
&= \frac{1-\lambda}{1-\lambda^{\delta+1}}
\left(\lambda^\sigma \sum_{k=\sigma}^{\delta+\sigma} R(k-\sigma)+ \sum_{i=0}^{\sigma}\lambda^i\sum_{k=\sigma}^{\sigma+\delta} \Phi^{2}(k)
\right)
\end{align}
Using Theorem \ref{thm:PE}, we obtain
\begin{align}
R(\sigma+1)&<\frac{1-\lambda}{1-\lambda^{\delta+1}}\left(\lambda^{\sigma}\sum_{k=0}^{\delta}R(k)+\frac{1-\lambda^{\sigma}}{1-\lambda}\beta_{\Phi} I\right).\label{ma:TRLS_temp1}
\end{align}
Furthermore, from the properties of the information matrix update law, the following relationship holds.
\begin{align}
\lambda^{\sigma} \sum_{k=0}^{\delta} R(k)&\leq \lambda^{\sigma} \sum_{i=0}^{\delta} \frac{1}{\lambda^{i}}R(\delta)\nonumber\\
&=\lambda^{\sigma} \frac{1-\frac{1}{\lambda^{\delta+1}}}{1-\frac{1}{\lambda}} R(\delta)\label{ma:TRLS_temp2}
\end{align}
By substituting (\ref{ma:TRLS_temp2}) into (\ref{ma:TRLS_temp1}) and rearranging, we obtain
\begin{align}
R(\sigma+1) &< \frac{1-\lambda}{1-\lambda^{\delta+1}}
\left(
  \lambda^{\sigma} \frac{1 - \tfrac{1}{\lambda^{\delta+1}}}{1 - \tfrac{1}{\lambda}} R(\delta)
  + \frac{1-\lambda^{\sigma}}{1-\lambda} \beta_{\Phi} I
\right)\nonumber\\
&= \frac{1-\lambda}{1-\lambda^{\delta+1}}
\left(
  \lambda^{\sigma} \frac{1-\lambda^{\delta+1}}{\lambda^{\delta}(1-\lambda)} R(\delta)
  + \frac{1-\lambda^{\sigma}}{1-\lambda} \beta_{\Phi} I
\right)\nonumber\\
&\leq \lambda^{\sigma-\delta}R(\delta)+\frac{1-\lambda^{\sigma}}{1-\lambda^{\delta+1}}\beta_{\Phi} I\nonumber\\
&\leq R(\delta)+\frac{1}{1-\lambda^{\delta+1}}\beta_{\Phi} I\triangleq \beta_R I
\end{align}
where $\beta_R\triangleq \sigma_{\max}(R(\delta))+\frac{1}{1-\lambda^{\delta+1}}\beta_{\Phi}>0$.
Therefore, the following inequality holds for any $k \geq k_e$,
\begin{align}
\alpha_R I < R(k) < \beta_R I
\end{align}

Further, after time step $k \geq k_e$, we can set $\delta = 0$; this is because $\delta = k_e$ corresponds to the interval in Theorem \ref{thm:PE}, and the positive definiteness of the extended regressor matrix is always satisfied after time step $k_e$, eliminating the need for interval $\delta$.
Therefore, the following relationship holds.
\begin{align}
\alpha_R = \alpha_{\Phi}, \qquad 
\beta_R = \sigma_{\max}(R(\delta)) + \frac{1}{1 - \lambda}\beta_{\Phi}.
\end{align}
\end{proof}




\end{document}